\tikzset{node distance=2cm, auto}
\newtheorem{proposition}{Proposition}
\newtheorem{lemma}{Lemma}
\newtheorem{remark}{Remark}
\newcommand{\w}{\widetilde}
\newcommand{\CC}{\mathbb{C}}
\newcommand{\pa}{\partial}
\newcommand{\be}{\begin{equation}}
\newcommand{\ee}{\end{equation}}
\newcommand{\x}{\times}
\title{Limit Shapes of the Stochastic Six Vertex Model}
\author{Nicolai Reshetikhin}
\address{N.R.: Department of Mathematics, University of California, Berkeley,
CA 94720, USA \\  \& KdV Institute for Mathematics, University of Amsterdam, 1098 XH Amsterdam, The Netherlands  \\ \& ITMO University. Saint Petersburg 197101, Russia.}
\email{reshetik@math.berkeley.edu}
\author{Ananth Sridhar}
\address{A.S.: Department of Physics, University of California, Berkeley,
CA 94720, USA}
\email{asridhar@berkeley.edu}
\begin{document}

\begin{abstract} It is shown that limit shapes for the stochastic 6-vertex model on a
cylinder with the
uniform boundary state on one end are solutions to the Burger type equation.
Solutions to these equations are studied for step initial conditions. When the circumference
goes to infinity the solution corresponding to critical initial densities  coincides with the
one found by Borodin, Corwin and Gorin.

\end{abstract}
\maketitle
\tableofcontents

\section*{Introduction}
The 6-vertex model is a lattice model of statistical mechanics with a long history, beginning with Pauling's study of the residual entropy of ice. The states of the model are orientations assigned to edges of the lattice satisfying the rule that each vertex has two incoming and two outgoing edges. States can equivalently be described as a ensembles of non-crossing lattice paths or as stepped surfaces called height functions.

The 6-vertex model is an exactly solvable lattice model; the weights of the model can be arranged to form commuting families of row-to-row transfer matrices that can be simultaneously diagonalized by Bethe ansatz \cite{LW}\cite{FadTakh}. The construction of transfer matrices and their algebraic properties are closely to the representation theory of the quantized universal enveloping algebra  $U_q(\widehat{sl_2})$, see for example \cite{JM} \cite{Res10}.

In the thermodynamic limit, the 6-vertex model exhibits the limit shape phenomena. The state of the system becomes deterministic at the macroscopic scale, with statistical fluctuations remaining only at the microscopic scale. Conjecturally, the height function in the thermodynamic limit can be computed via a variational principle  \cite{CpPr}\cite{CPZJ}\cite{PR} \cite{ZJ} \cite{RS}. Such variational principle was developed
for dimer models in \cite{CKP}.

For special values of Boltzmann weights in the 6-vertex model the transfer-matrix satisfies
Markov property and can be regarded as the collection of transition probabilities for a stochastic
process on a row of vertical edges of the model. For these values of weights the model is known as
the stochastic 6-vertex model \cite{BCG}.  The corresponding Markov process is a discrete time generalization of the asymmetric exclusion process (ASEP).

The goal of this paper is to study the limit shapes of the stochastic 6-vertex model on the cylinder. Our main result is  the partial differential equations determining the height function of the stochastic 6-vertex model. This result is derived from the conjectural variational principle for the 6-vertex model.
The limit shape with the uniform boundary state on the top end of the cylinder turns out to be a first order differential equations of the Burger's type, which can be solved by the method of characteristics. In the ASEP limit of the 6-vertex model, we recover the inviscid Burgers equation as computed by \cite{BF} \cite{Rez}\cite{GS}. As an example, we compute the height function exactly for domain wall type boundary conditions on the cylinder. When the initial condition is a step function, our solution converges to the limit shape found in \cite{BCG} in the limit when the circumference of the cylinder goes to infinity.

There are a few areas of further work that we plan to pursue in followup publications. Firstly, note (see Section \ref{sec:stochastic}) that the uniform boundary state on the top of the cylinder is crucially a co-eigenstate of the transfer matrix with the eigenvalue $1$, corresponding to the Markov property of the stochastic model. Similar results to those in this paper hold for the general 6-vertex model, with the ground co-state being the boundary state for the top end of the
cylinder.
Secondly, for the generic phases of the 6-vertex and dimer model, the statistical fluctuations about the limit shape are conjecturally described by the Gaussian free field determined by surface tension functional near the limit shape. The degeneracy of the surface tension function at the stochastic point (see Lemma \ref{zero-He}) explains the appearance of the Tracy-Widom distribution and KPZ scaling as computed in \cite{BCG}. Fluctuations for TASEP
model near shocks were described recently in \cite{BL}. Fluctuations of height functions in the stochastic
6-vertex model were studied recently in \cite{A}, where, among other results the convergence of these fluctuations to Baik-Rains distribution was proven. It has been shown there that fluctuations are Airy along
a special line. From our point of view this is the line where the quadratic variation
of the "effective action" (\ref{eff-action}) is non-degenerate. Another interesting results about limit shapes were obtained in \cite{CS} where the authors studied tangent lines to limit shapes. These
tangent lines may have relation to the characteristics of the Burgers type
equation obtained in this paper.
Lastly, we expect that similar results hold for the higher spin 6-vertex models.

The outline of the paper is as follows: in the first section we recall basic facts about the 6-vertex model with magnetic fields. In the second
section we describe the stochastic weights and the limit to the ASEP. The third section contains the derivation of the Burgers type differential equation for the the limit shape with the uniform distribution on
one end of the cylinder. In the last section we give some solutions for the step boundary conditions
on the other end of the cylinder. The Appendix A contains the prove of the stochasticity of transfer matrices on a cylinder for stochastic weights. The Appendix B contains necessary facts about the free energy
of the 6-vertex model.

Acknowledgements: We would like to thank D. Keating for many helpful discussions and for running numerical simulations. Both authors were supported by the NSF  grant DMS-1201391. The work of A.S. was supported by RTG NSF grant 30550. We are grateful for the hospitality at Universite Paris VII.
The research of N.R. was partly supported
by the Russian Science Foundation (project no. 14-11-00598), he also would like to thank QGM center at the University of Aarhus for hospitality.

\section{The 6-vertex model}
In this paper, we will focus primarily on the 6-vertex model defined on the square lattice embedded on the cylinder. In this section we review the essential definitions and refer the reader to \cite{PR},\cite{AR},\cite{Ba} for further details.

\subsection{The 6-vertex model on the cylinder}

By the \emph{square lattice} $ \mathbb{Z}^2 $ we mean the graph embedded in $ \mathbb{R}^2 $ with vertices at positions $ ( m, n) \in \mathbb{R}^2 $ for $ m,n \in \mathbb{Z} $ and edges joining all vertices distance $ 1 $ apart.

\subsubsection{Cylinder Graphs}
The \emph{cylinder graph} $ \mathcal{C}_{MN} $ is the quotient of the subgraph $ \mathbb{Z} \times \{0 \cdots N-1 \} \subset \mathbb{Z}^2 $ by $ (m,n) \sim ( m+ M, n) $.

The graph $ \mathcal{C}_{MN} $ consists internal vertices, internal edges, boundary vertices, and boundary edges. The sets of \emph{boundary vertices}, denoted by $ \partial V_1 $ and $\partial V_2 $,  consist of vertices at positions $ (m,0) $ and $ (m,N-1) $ respectively. The \emph{internal vertices} are non-boundary vertices. The \emph{boundary edges} $ \partial E_{i} $  consists of edges adjacent to exactly one boundary vertex in $ \partial V_{i} $, and the \emph{internal edges} are those adjacent to only internal vertices. The \emph{boundary faces} $ \partial F_{i} $ are those adjacent to boundary edges in $ \partial E_{i} $.

\subsubsection{Configurations and Boltzmann Weights}
A \emph{configuration} $ S $  of the 6-vertex model is a subset of internal and boundary edges satisfying the ice rule: at each internal vertex $ v $, the edges in $ S $ adjacent $ v $ must be one of six possible local configurations as shown in Figure \ref{fig:icerule}. The ice rule implies that a configuration of the 6-vertex model can be seen as an ensemble of paths that do not cross (although they can touch at an $ a_2 $ vertex), see Figure \ref{fig:cylindergraph}.

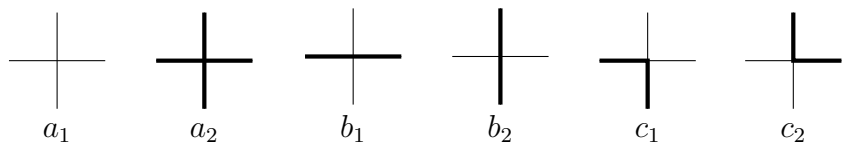
\begin{figure}[h]
\begin{tikzpicture}[scale=.8]
\draw (-.8,0) -- (.8,0);
\draw (0,-.8) -- (0,.8);
\node at (0,-1.2){$a_1$};
\end{tikzpicture} \; \;
\begin{tikzpicture}[scale=.8]
\draw (-.8,0) -- (.8,0);
\draw (0,-.8) -- (0,.8);
\draw[ultra thick](-.8,0)--(.8,0);
\draw[ultra thick](0,-.8)--(0,.8);
\node at (0,-1.2){$a_2$};
\end{tikzpicture} \; \;
\begin{tikzpicture}[scale=.8]
\draw (-.8,0) -- (.8,0);
\draw (0,-.8) -- (0,.8);
\draw[ultra thick](-.8,0)--(.8,0);
\node at (0,-1.2){$b_1$};
\end{tikzpicture} \; \;
\begin{tikzpicture}[scale=.8]
\draw (-.8,0) -- (.8,0);
\draw (0,-.8) -- (0,.8);
\draw[ultra thick](0,-.8)--(0,.8);
\node at (0,-1.2){$b_2$};
\end{tikzpicture} \; \;
\begin{tikzpicture}[scale=.8]
\draw (-.8,0) -- (.8,0);
\draw (0,-.8) -- (0,.8);
\draw[ultra thick](0,-.8)--(0,0)--(-.8,0);
\node at (0,-1.2){$c_1$};
\end{tikzpicture} \; \;
\begin{tikzpicture}[scale=.8]
\draw (-.8,0) -- (.8,0);
\draw (0,-.8) -- (0,.8);
\draw[ultra thick](.8,0)--(0,0)--(0,.8);
\node at (0,-1.2){$c_2$};
\end{tikzpicture}

\caption{Six vertex configurations and their weights.} \label{fig:icerule}
\end{figure}

\begin{figure} 
\begin{tikzpicture}[xscale = .4, yscale = .5]
\fill (-4,0) circle (0pt);
\fill (4,0) circle (0pt) ;
\draw [ultra thin,  lightgray, domain = 0:360, smooth] plot ( { 2*sin(\x) },{3+ .6*cos(\x)} );
\draw [ultra thin,  lightgray, domain = 90:270, smooth] plot ( { 2*sin(\x) },{-2 + .6*cos(\x)} );
\draw [ultra thin,  lightgray] (2,3)--(2,-2);
\draw [ultra thin, lightgray] (-2,3)--(-2,-2);
\draw (2,-2)-- (2,3);
\draw (1.24698, -2.4691)-- (1.24698, 2.5309);
\draw (-0.445042, -2.58496)--(-0.445042, 2.41504);
\draw [dashed]  (-1.80194, -2.26033)--(-1.80194, 2.73967);
\draw [domain = 90:270, smooth] plot ( { 2*sin(\x) },{0+ .6*cos(\x)} );
\draw [domain = 90:270, smooth] plot ( { 2*sin(\x) },{1+ .6*cos(\x)} );
\draw [domain = 90:270, smooth] plot ( { 2*sin(\x) },{2 + .6*cos(\x)} );

\draw [ultra thick] (-0.445042, -2.58496)--(-0.445042, 2.41504);
\draw [ultra thick]  (-1.80194, -1.26033)--(-1.80194, 1.73967-2);
\draw [ultra thick] (1.24698, -3.4691+3)-- (1.24698, 2.5309-2);

\draw [ultra thick, domain = 90:141.429, smooth] plot ( { 2*sin(\x) },{0 + .6*cos(\x)} );
\draw [ultra thick, domain = 141.429:270, smooth] plot ( { 2*sin(\x) },{1 + .6*cos(\x)} );
\draw [ultra thick, domain = 90:244.286, smooth] plot ( { 2*sin(\x) },{-1 + .6*cos(\x)} );
\draw [ultra thick, domain = 244.286:270, smooth] plot ( { 2*sin(\x) },{  .6*cos(\x)} );
\draw [ultra thick] (2,2)-- (2,3);
\end{tikzpicture} \;
\begin{tikzpicture}
\draw[step=.5 cm, very thin] (-1.99,-1.48) grid (1.49,.98);
\draw[ thin, lightgray, decoration={markings,mark=at position 1 with {\arrow[scale=2]{>}}},
    postaction={decorate}](-2,-1.5) -- (-2,1.5);
\draw[ thin, lightgray, decoration={markings,mark=at position 1 with {\arrow[scale=2]{>}}},
    postaction={decorate}](-2,-1.5) -- (2,-1.5);

\draw[dashed] (1.5,-1.5)--(1.5,1);
\draw[dashed] (-2,-1.5)--(-2,1);

\draw[ultra thick] (-1.5,-1.5)--(-1.5,0)--(-2,0); \draw[ultra thick] (1.5,0)--(.5,0)--(.5,.5)--(-.5,.5)--(-.5,1);
\draw[ultra thick] (.5,-1.5)--(.5,-1)--(-2,-1)--(-2,-.5); \draw[ultra thick] (1.5,-1)--(1.5,-.5)--(1,-.5)--(1,1);
\draw[ultra thick] (0,-1.5)--(0,-.5)--(-1,-.5)--(-1,0)--(-1.5,0)--(-1.5,1);
\end{tikzpicture}
 
\caption{A six vertex configuration on a cylindrical graph.} \label{fig:cylindergraph}
\end{figure}
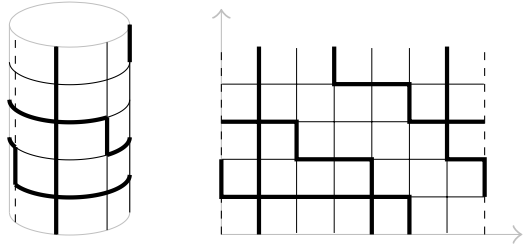
Each vertex $ v $ is assigned a vertex weight $ w(v,S) $ depending on the configuration of adjacent edges according to Figure \ref{fig:icerule}. The Boltzmann weight of $ S $ is
\begin{align*}
W (S) = \prod_{ v \in \text{int}(V)} w(v,S)
\end{align*}

\subsubsection{Magnetic Fields} \label{sec:magfield}We will assume the following useful parametrization of weights,  given in terms of $ (a,b,c) $, \emph{magnetic field} $ (H,V) $, and $ \lambda $ as:
\begin{equation} \label{eq:abchv}
\begin{aligned}
\begin{array}{cc}
 a_1 = a \; e^{H+V}  &  a_2 = a \; e^{-H - V} \\
 b_1 = b \; e^{-H+V}   &  b_2 = b \; e^{H-V} \\
 c_1 = c \; \lambda    &  c_2 = c \; \lambda^{-1}
\end{array}
\end{aligned}
\end{equation}
The ice rule implies that on the cylinder, the $ c_1 $ and $ c_2 $ type vertices occur in pairs, and there is no loss in generality by taking their weights to be the same, i.e. $\lambda=1$. However, in the stochastic
6-vertex model $a_i=b_i+c_i$ (see equation \ref{eq:stochasticweights}), which means it is convenient to keep $\lambda\neq 1$.

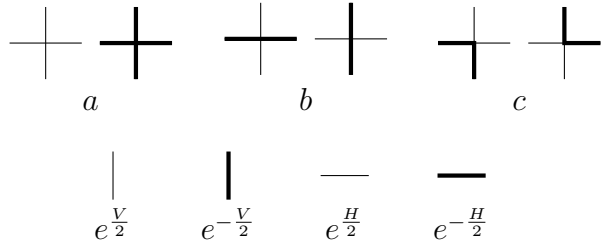
\begin{figure}[h]
\begin{tikzpicture}[scale=.8]
\begin{scope}[shift={(-.75,0)}]
\draw (-.6,0) -- (.6,0);
\draw (0,-.6) -- (0,.6);
\end{scope}
\node at (0,-1){$ a $};
\begin{scope}[shift={(.75,0)}]
\draw (-.6,0) -- (.6,0);
\draw (0,-.6) -- (0,.6);
\draw[ultra thick](-.6,0)--(.6,0);
\draw[ultra thick](0,-.6)--(0,.6);
\end{scope}
\end{tikzpicture} \; \;
\begin{tikzpicture}[scale=.8]
\begin{scope}[shift={(-.75,0)}]
\draw (-.6,0) -- (.6,0);
\draw (0,-.6) -- (0,.6);
\draw[ultra thick](-.6,0)--(.6,0);
\end{scope}
\node at (0,-1){$ b $};
\begin{scope}[shift={(.75,0)}]
\draw (-.6,0) -- (.6,0);
\draw (0,-.6) -- (0,.6);
\draw[ultra thick](0,-.6)--(0,.6);
\end{scope}
\end{tikzpicture} \; \;
\begin{tikzpicture}[scale=.8]
\begin{scope}[shift={(-.75,0)}]
\draw (-.6,0) -- (.6,0);
\draw (0,-.6) -- (0,.6);
\draw[ultra thick](-.6,0)--(0,0)--(0,-.6);
\end{scope}
\node at (0,-1){$ c $};
\begin{scope}[shift={(.75,0)}]
\draw (-.6,0) -- (.6,0);
\draw (0,-.6) -- (0,.6);
\draw[ultra thick](.6,0)--(0,0)--(0,.6);
\end{scope}
\end{tikzpicture} \vspace{10pt} \\
\begin{tikzpicture}[scale=.8]
\draw (0,-.4) -- (0,.4);
\node at (0,-.8){$ e^{\frac{V}{2} }$};
\end{tikzpicture} \; \;
\begin{tikzpicture}[scale=.8]
\draw [ultra thick] (0,-.4) -- (0,.4);
\node at (0,-.8){$ e^{- \frac{V}{2} }$};
\end{tikzpicture} \; \;
\begin{tikzpicture}[scale=.8]
\draw (-.4,0) -- (.4,0);
\node at (0,-.8){$ e^{\frac{H}{2} }$};
\end{tikzpicture} \; \;
\begin{tikzpicture}[scale=.8]
\draw [ultra thick] (-.4,0) -- (.4,0);
\node at (0,-.8){$ e^{- \frac{H}{2}} $};
\end{tikzpicture} \; \;
 
\caption{The symmetric vertex weights and magnetic fields on edges.} \label{fig:magneticfields}
\end{figure}

\subsubsection{Boundary Conditions and the Partition Function} The \emph{boundary configurations} $ \partial S_i $ of a configuration $ S $ are the restrictions of $ S $ to the boundary edges $ \partial E_i $.  A \emph{Dirichlet boundary condition} for the 6-vertex model fixes the boundary configurations of $ S $. Let $ \eta $ and $ \eta' $ be boundary configurations of $ \partial E_1 $ and $ \partial E_2 $. The cylinder partition function with fixed Dirichlet boundary conditions is defined as:
\begin{align} \label{eq:partfunc}
Z_{MN} \big[\eta, \eta' \big] = \sum_{\substack{\text{configs. } S \\ \partial S_1 = \eta \\ \partial S_2 = \eta'}} W(S)
\end{align}

A pair of boundary conditions $ (\eta, \eta' ) $ is called  {\it admissible} if $ Z_{MN}[ \eta, \eta'] \neq 0 $. Let $ m(\eta) $ be the number of edges in a boundary configuration $ \eta $. Since for any configuration $ S $ the ice rules imply $ m(\partial S_1) = m(\partial S_2) $, $ (\eta, \eta') $ is admissible only if $ m(\eta) = m(\eta') $.

A \emph{boundary state} is a probability distribution on the space of boundary conditions. We will write $ b(\eta, \eta') $ for the probability of $ (\eta, \eta') $ in the state $ b $. A \emph{local boundary state} on the cylinder is a distribution on the space of configurations on one end. Two local boundary states  $ b_1,  b_2 $ corresponding to two ends $\pa E_1$ and $\pa E_2$  of the cylinder define the product boundary state for the cylinder by $ b(\eta, \eta') = b_1(\eta) b_2(\eta') $. The partition function for a cylinder with two local boundary states is
\[
Z_{MN}(b_1,b_2)=\sum_{\eta,\eta} b_2(\eta) \; Z_{MN}[\eta,\eta'] \; b_1(\eta')
\]
Here we use round parantheses as opposed to square brackets for the partition function with Dirichlet  boundary conditions \ref{eq:partfunc}.

Simple examples of boundary states are:
\begin{enumerate}[1)]
\item Pure states (Dirichlet boundary conditions) in which the distribution, say $b_1$  is supported at a single configuration $\eta_0$, $ b_1(\eta) = \delta_{\eta, \eta_0} $.
\item The uniform distribution on the upper end. In this case $ b_2(\eta')=2^{-M} $.
\end{enumerate}
In this paper we will be mostly interested in boundary conditions which are uniform on the top end and Dirichlet on the lower end.

The partition function $ Z_{MN}(b_1,b_2) $ defines a probability measure $ \mu $ on the set of 6-vertex configurations on $ \mathcal{C}_{MN} $, where the probability of a configuration $ S $ is:
\begin{align*}
\mu(S) = \frac{b_1(\partial S_1) \; b_2(\partial S_2) \; W(S)}{Z_{MN}(b_1,b_2)}
\end{align*}
For any random variable $ X $ on six vertex states (for example, the height function introduced in section \ref{sec:heightfunction}), we will write $ \mathbb{E}_{MN}[X] $ for the expected value of $ X $ with respect to this  measure.

A boundary state $b$ defines a vector in a Hilbert space $ \mathscr{H}_M \simeq (\mathbb{C}^2)^{\otimes M} $ as follows. A boundary configuration $ \eta $ of $ \partial E_1 $ is identified with the element
$\psi_\eta \in \mathscr{H}_M $ of the tensor product basis:
\begin{align*}
\psi_\eta = e_{\eta(1)} \otimes e_{\eta(2)} \cdots \otimes e_{\eta(M)}
\end{align*}
where $ i = 1, \cdots, M $ enumerate the boundary edges; $ \eta(i) = 1 $ if the $i$th edge is in $ \eta $ and $ \eta(i) = 0 $ otherwise; and $ \{e_0, e_1 \} $ is the standard basis for $ \mathbb{C}^2 $. A local boundary state $ b $ is identified with the vector $ \sum_{\eta} b(\eta) \psi_\eta $.

Let $ Z_{MN}: \mathscr{H}_M \rightarrow \mathscr{H}_M $ be the linear map with matrix elements given by the partition functions $ Z_{MN}(\eta, \eta') $. The partition function with boundary states $ b_1 $ and $ b_2 $ is the inner product:
\begin{align} \label{eq:zip0}
Z_{MN}(b_1, b_2) = \langle b_2, Z_{MN} \; b_1 \rangle
\end{align}
where $\langle b_2,  b_1 \rangle=\sum_\eta b_1(\eta)b_2(\eta)$.

Thus, $\mathscr{H}_M$ can be naturally regarded as the space of states corresponding to one end of the cylinder. Boundary states for the cylinder are vectors in $\mathscr{H}_M\otimes \mathscr{H}_M$. Strictly speaking the first factor should be $\mathscr{H}_M^*$ but since we have a scalar product we identify the
vector space with its dual.

It is clear that $ \mathscr{H}_M $ has the decomposition:
\begin{align*}
\mathscr{H}_M  = \oplus_{k = 0}^{M} \mathscr{H}_M^{(k)}
\end{align*}
where $ \mathscr{H}_M^{(k)} $ is the subspace spanned by boundary configurations with $ m(\eta) = k $. Because of the 6-vertex rule, $ Z_{MN} $  has the block decomposition:
\[
Z_{MN}=\oplus_{m=0}^M Z_{MN}^{(m)},  \; \; \; Z_{MN}^{(m)}:\mathscr{H}_M^{(m)}\to \mathscr{H}_M^{(m)}
\]
The restricted partition function is defined as:
\begin{align} \label{eq:zip}
Z_{MN}^{(m)}(b_1, b_2) = \langle b_2^{(m)}, Z_{MN}^{(m)} \; b_1^{(m)} \rangle
\end{align}
where $ b^{(m)} $ is the orthogonal projection of $ b $ to $ \mathscr{H}_M^{(m)} $.

\subsubsection{The Transfer Matrix}  The transfer matrix $ T_M:  \mathscr{H}_M\rightarrow \mathscr{H}_M $  is the partition function (\ref{eq:zip}) with one row, $ T_M = Z_{M1} $. It has a more explicit description as follows.

Define the matrix $R: \CC^2 \otimes \CC^2 \rightarrow  \CC^2 \otimes \CC^2$
with matrix elements:
\begin{align*}
R=
\begin{pmatrix}
a e^{H+V} & 0 & 0 & 0 \\
0 & b e^{-H+V} & c\lambda^{-1}& 0 \\
0 & c\lambda & b e^{-V+H} & 0 \\
0 & 0 & 0 & a e^{-H-V}
\end{pmatrix}
\end{align*}
in the standard ordered tensor product basis $ \big( e_0 \otimes e_0,  e_0 \otimes e_1,  e_1 \otimes e_0, e_1 \otimes e_1 \big) $ for $ \CC^2 \otimes \CC^2 $. We will use Baxter's (projective) parametrization of the
symmetric model with zero magnetic fields: \footnote{ We use the parametrization when $\Delta=(a_1a_2+b_1b_2-c_1c_2)/2\sqrt{a_1a_2b_1b_2}>1$. These are the only values of $\Delta$ which occur in stochastic the 6-vertex model.}
\[
a=\sinh(u+\eta), \ \ b=\sinh(u), \ \ c=\sinh(\eta)
\]
and notation:
\begin{align*}
R(u)=
\begin{pmatrix}
\sinh(u+\eta) & 0 & 0 & 0 \\
0 & \sinh(u) & \sinh(\eta)& 0 \\
0 & \sinh(\eta) & \sinh(u) & 0 \\
0 & 0 & 0 & \sinh(u+\eta)
\end{pmatrix}
\end{align*}
It is clear that in Baxter's parametrization with $\lambda=e^\alpha$
\[
R=\left( D^V\otimes D^{H+\alpha}\right) \; R(u) \; \left( D^V\otimes D^{H-\alpha} \right)
\]
where
\begin{align}\label{DH}
D^\beta=\begin{pmatrix}e^{\beta/2}&0\\0&e^{-\beta/2}\end{pmatrix}
\end{align}

Enumerating the tensor components of $ \CC^2 \otimes \mathscr{H}_M \simeq (\mathbb{C}^2)^{\otimes (M+1)} $ by $\{ 0, 1, \cdots M \}$, define $ R_{ij}: \CC^2 \otimes \mathscr{H}_M \rightarrow \CC^2 \otimes \mathscr{H}_M $  as the map that acts as $ R $ on the $ i $th and $j $th tensor components and identity on the rest. The transfer matrix is then the trace over the $0$th tensor component of the matrix product:
\begin{align*}
T_M = \text{Tr}_0 ( R_{10} R_{20} \cdots R_{M0} )
\end{align*}
which can also be written as
\[
T_M(u)=(D^{2V}_1\cdots D^{2V}_M)^N t_M(u)
\]
where
\[
t_M(u)=\text{Tr}_0(D^{2H}_0 R_{10}(u)D^{2H}_0 R_{20}(u) \cdots D^{2}H_0 R_{M0}(u))
\]
The Yang-Baxter equation for $R(u)$ together with the ice rule imply the commutativity of transfer matrices:
\[
[T_M(u), T_M(v)]=0
\]
for any $u,v$.

It is clear that the transfer matrix preserves the spaces $ \mathscr{H}_M^{(m)} $, and we write $ T_M^{(m)} $ for the restriction of the transfer matrix to $ \mathscr{H}_M^{(m)} $.
The partition functions  in terms of the transfer matrix is:
\begin{align*}
Z_{MN}^{(m)}(b_1,b_2) = \langle b_2^{(m)}, \left(T_M^{(m)}(u) \right)^N, b_1^{(m)} \rangle
\end{align*}

\subsubsection{The Height Function}\label{sec:heightfunction}
To each 6-vertex configuration $ S $ on $ \mathcal{C}_{MN} $, we associate a function $ h_S $ on faces of the graph $ [0,M] \times [0,N-1] $. It is defined as follows. At the left corner face, set $ h_S(0,0) =0 $. Then when moving upward or rightward, the height function changes by $ +1 $ if crossing an edge in $ S $ and $ -1 $ otherwise (see Figure \ref{fig:heightfundef}).

The ice rule implies that the  monodromy $ h_S(M,n) - h_S(0,n) $ is a independent of $ n $. Thus $ h_S $ defines a multivalued function\footnote{Multivalued here means that it is a function on a cylinder with the branch cut chosen as $(m,0); m=0,\cdots, M-1$, see Fig. \ref{fig:cylindercut}.} on the faces of the cylinder $ \mathcal{C}_{MN} $ called the height function of $ S $. Note that (up to an additive constant) the boundary configuration of $ S $ determines the values of $ h_S $ on boundary faces.

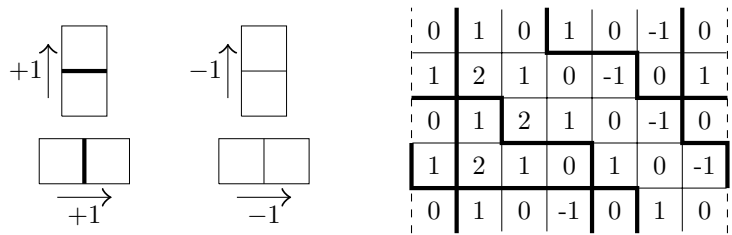
\begin{figure}[h]
\begin{tikzpicture}[scale=.6]
\draw[very thin] (-.5,0)--(-.5,2)--(.5,2)--(.5,0)--(-.5,0);
\draw[ultra thick] (-.5,1)--(.5,1);
\draw[decoration={markings,mark=at position 1 with {\arrow[scale=2]{>}}},
    postaction={decorate}] (-.8,.4) -- (-.8,1.6);
\node at (-1.3,1){\footnotesize{$+1$}};

\draw[very thin] (-1,-1.5)--(-1,-.5)--(1,-.5)--(1,-1.5)--(-1,-1.5);
\draw[ultra thick] (0,-1.5)--(0,-.5);
\draw[decoration={markings,mark=at position 1 with {\arrow[scale=2]{>}}},
    postaction={decorate}] (-.6,-1.8) -- (.6,-1.8);
\node at (0,-2.2){\footnotesize{$+1$}};
\end{tikzpicture} \; \; 
\begin{tikzpicture}[scale=.6]
\draw[very thin] (-.5,0)--(-.5,2)--(.5,2)--(.5,0)--(-.5,0);
\draw[very thin] (-.5,1)--(.5,1);
\draw[decoration={markings,mark=at position 1 with {\arrow[scale=2]{>}}},
    postaction={decorate}] (-.8,.4) -- (-.8,1.6);
\node at (-1.3,1){\footnotesize{$-1$}};
\draw[very thin] (-1,-1.5)--(-1,-.5)--(1,-.5)--(1,-1.5)--(-1,-1.5);
\draw[very thin] (0,-1.5)--(0,-.5);
\draw[decoration={markings,mark=at position 1 with {\arrow[scale=2]{>}}},
    postaction={decorate}] (-.6,-1.8) -- (.6,-1.8);
\node at (0,-2.2){\footnotesize{$-1$}};
\end{tikzpicture} \hspace{30pt}
\begin{tikzpicture}[scale=1.2]
\draw[step=.5 cm, very thin] (-1.99,-1.48) grid (1.49,.98);

\draw[dashed] (1.5,-1.5)--(1.5,1);
\draw[dashed] (-2,-1.5)--(-2,1);

\node at (-1.75,-1.25){\footnotesize{0}};
\node at (-1.25,-1.25){\footnotesize{1}};
\node at (-.75,  -1.25){\footnotesize{0}};
\node at (-.25,  -1.25){\footnotesize{-1}};
\node at (.25,   -1.25){\footnotesize{0}};
\node at (.75,   -1.25){\footnotesize{1}};
\node at (1.25, -1.25){\footnotesize{0}};

\node at (-1.75,-.75){\footnotesize{1}};
\node at (-1.25,-.75){\footnotesize{2}};
\node at (-.75,-.75){\footnotesize{1}};
\node at (-.25,-.75){\footnotesize{0}};
\node at (.25,-.75){\footnotesize{1}};
\node at (.75, -.75){\footnotesize{0}};
\node at (1.25, -.75){\footnotesize{-1}};

\node at (-1.75,-.25){\footnotesize{0}};
\node at (-1.25,-.25){\footnotesize{1}};
\node at (-.75,-.25){\footnotesize{2}};
\node at (-.25,-.25){\footnotesize{1}};
\node at (.25,-.25){\footnotesize{0}};
\node at (.75, -.25){\footnotesize{-1}};
\node at (1.25, -.25){\footnotesize{0}};

\node at (-1.75,.25){\footnotesize{1}};
\node at (-1.25,.25){\footnotesize{2}};
\node at (-.75,.25){\footnotesize{1}};
\node at (-.25,.25){\footnotesize{0}};
\node at (.25,.25){\footnotesize{-1}};
\node at (.75, .25){\footnotesize{0}};
\node at (1.25, .25){\footnotesize{1}};

\node at (-1.75,.75){\footnotesize{0}};
\node at (-1.25,.75){\footnotesize{1}};
\node at (-.75,.75){\footnotesize{0}};
\node at (-.25,.75){\footnotesize{1}};
\node at (.25,.75){\footnotesize{0}};
\node at (.75, .75){\footnotesize{-1}};
\node at (1.25, .75){\footnotesize{0}};

\draw[ultra thick] (-1.5,-1.5)--(-1.5,0)--(-2,0); \draw[ultra thick] (1.5,0)--(.5,0)--(.5,.5)--(-.5,.5)--(-.5,1);
\draw[ultra thick] (.5,-1.5)--(.5,-1)--(-2,-1)--(-2,-.5); \draw[ultra thick] (1.5,-1)--(1.5,-.5)--(1,-.5)--(1,1);
\draw[ultra thick] (0,-1.5)--(0,-.5)--(-1,-.5)--(-1,0)--(-1.5,0)--(-1.5,1);
\end{tikzpicture}
\caption{The height function rules and an example.} \label{fig:heightfundef}
\end{figure} 

The function $ h_S $ satisfies\footnote{The height function we defined here differs from the one we use in
\cite{RS} as $h_S=2\theta$, where $\theta$ is the discrete height function from \cite{RS}.}:
\begin{equation} \label{eq:lipshitzheight}
\begin{aligned}
 h_S(m+1,n) - h_S(m,n) &= \pm1 \\
h_S(m,n+1) - h_S(m,n) &= \pm 1
\end{aligned}
\end{equation}

A multivalued function $h'$ that satisfies (\ref{eq:lipshitzheight}) and $ h'(0,0) = 0 $, is called a \emph{height function} for the 6-vertex model. It is clear that there is a bijection between the 6-vertex configurations and height functions.

\subsection{The Thermodynamic Limit and Limit Shapes}

In this section, we recall basics about the 6-vertex model on the cylinder $ \mathcal{C}_{MN} $ in the limit $ M,N \rightarrow \infty $ with $ M / N $ fixed.

\subsubsection{Embedded Graphs and Normalized Height Functions}\label{subsec:nhf} Let $ C_{LT} = \mathbb{R} \times [0,T] \; /  \; \{ (x,y) \sim (x+L,y) \} $ be the flat cylinder of length $ T $ and circumference $ L $. We fix a branch cut $\{(0,y)\}_0^T$ when we discuss multivaued functions on $C_{LT}$, see Figure \ref{fig:cylindercut}.

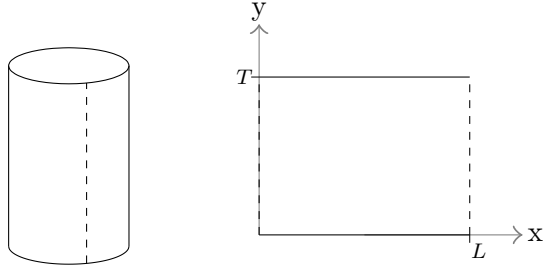
\begin{figure}[h]
\begin{tikzpicture}[baseline,scale = .4]
\fill (-4,0) circle (0pt);
\fill (4,0) circle (0pt) ;
\draw [domain = 0:360, smooth] plot ( { 2*sin(\x) },{3+ .6*cos(\x)} );
\draw [domain = 90:270, smooth] plot ( { 2*sin(\x) },{-3 + .6*cos(\x)} );
\draw (2,3)--(2,-3);
\draw (-2,3)--(-2,-3);
\draw [dashed] (0.59104,-3.5732)-- (0.59104,(2.4268);
\end{tikzpicture} \;
\begin{tikzpicture}[baseline, scale = .7]

\draw[thin, gray, decoration={markings,mark=at position 1 with {\arrow[scale=2]{>}}},
    postaction={decorate}](-2,-1.5) -- (-2,2.5);
\draw[ thin, gray, decoration={markings,mark=at position 1 with {\arrow[scale=2]{>}}},
    postaction={decorate}](0,-1.5) -- (3,-1.5);

\draw[dashed] (-2,-1.5)--(-2,1.5);
\draw[dashed] (2,-1.5)--(2,1.5);

\draw (-2,1.5)--(2,1.5);
\draw (-2, -1.5)--(2,-1.5);
\draw [ultra thin, dashed] (2,-1.5)--(2,-1.7);
\node at  (2.1,-1.8) { \tiny{ $L $}};
\draw [ultra thin, dashed] (-2,1.5)--(-2.2,1.5);
\node at  (-2.35,1.5) { \tiny{ $T $}};

\node at  (3.25,-1.5) { \small{x} };
\node at  (-2,2.75) { \small{y} };
\end{tikzpicture}
 
\caption{Branch cut on the cylinder.} \label{fig:cylindercut}
\end{figure}

Assuming $ L/T = M/N $, the graph $ \mathcal{C}_{MN} $ can be embedded in $ C_{LT} $ with mesh $ \epsilon = L/ N $ by $ \phi_\epsilon: \mathcal{C}_{MN} \hookrightarrow C_{LT} $, defined by rescaling coordinates $ (m,n) \mapsto (\epsilon m, \epsilon n ) $.

The embedding $\phi_\epsilon $ of the lattice to a plane, brings a height function $ h $ on $ \mathcal{C}_{MN} $ to a piecewise constant function on $ C_{LT} $, with constant value on each face of $ \phi_\epsilon(\mathcal{C}_{MN})\subset \mathbb{R}^2 $. We define the \emph{normalized height function} $\varphi $ on $ C_{LT} $ by $ \varphi(x,y) = \epsilon h\left(\lfloor x/ \epsilon \rfloor, \lfloor y/ \epsilon \rfloor \right) $. It satisfies constraints:
\begin{equation}
\begin{aligned}
\varphi(x + \epsilon,y) - \varphi(x,y)  = \pm \epsilon \\
\varphi(x , y + \epsilon) - \varphi(x,y) = \pm \epsilon
\end{aligned}
\end{equation}

In a similar way, a boundary configuration $ \eta $ on an end of the cylinder defines a piecewise constant function $ \chi: [0,L] \rightarrow \mathbb{R} $, satisfying $ \chi(x + \epsilon) - \chi(x) = \pm \epsilon $, called a \emph{normalized boundary height function}.

\subsubsection{Stabilizing Boundary Conditions and Thermodynamic Limit.} Let $\{\epsilon_n\}$ be a
sequence of positive numbers (meshes) such that $\epsilon_n \to 0$ as $ n \to \infty$.
Set $ N_n= \lfloor T/\epsilon_n \rfloor $ and $M_n= \lfloor L / \epsilon_n \rfloor $ where $T,L$ are the length and the circumference of the macroscopic cylinder $C_{LT}$.

Denote by $ H_L $ the space of functions $ \chi $ on $ [0,L] $, which satisfy the Lipshitz condition $ | \chi(x) - \chi(x') | < | x- x'| $, and denote the subspace of functions with fixed monodromy $ \rho = \chi(L)-\chi(0) $ by $H_{L}^{(\rho)} $.

Let $ \{ \eta_{n} \}_{n=1}^\infty $ be a sequence of normalized boundary height functions for the cylinder with circumference $ M_n $. We say the sequence is {\it stabilizing} if the normalized boundary height functions of $ \eta_{n} $ converge in the uniform metric to a function $ \eta \in H_L $.

Stabilizing Dirichlet boundary conditions\footnote{From now on we will refer to pure states corresponding to Dirichlet boundary conditions as to {\it boundary conditions} and to non-pure states as {\it boundary states}}  is a sequence of admissible boundary conditions $ \{ (\eta_n, \eta_n') \}_{n=1}^\infty $ stabilizing to $ ( \chi, \chi' ) $. The normalized free energy with stabilizing Dirichlet boundary conditions is:
\begin{align*}
f_{LT} [\chi,\chi']  =  -\lim_{n \rightarrow \infty} \epsilon_n^2 \log \left( Z_{M_n N_n}\big[\eta_{ n }, \; \eta_{ n }', \big] \right)
\end{align*}
Here $Z_{M_n N_n} \big[\eta_{n}, \eta_{n}' \big]$ is the partition function  with Dirichlet boundary conditions
$\eta_{n}, \eta_{n}'$ defined in (\ref{eq:partfunc}). We assume that the limit exists, but will not attempt to prove this.

We will say that the sequence of boundary states $b^{(n)}(\chi)$ is {\it stabilizing} as $n\to\infty$ if for each sequence of boundary height functions $\{\chi^{(n)}\}$ stabilizing to a piecewise smooth $\chi\in H_{L,\rho}$, there exists a function $ \beta $ on $[-1,1]\times [0,L]$ such that:
\[
b^{(n)}(\chi^{(n)})=\exp\left( -\frac{1}{\epsilon_n^2}\int_0^L\beta(\partial_x \chi(x),x) \; dx+o \left(\frac{1}{\epsilon_n^2}\right) \right)
\]
as $n\to \infty$. We assume that $ \beta $ is non-negative and smooth and we will also call such
states $\beta$-stabilizing. In the uniform distribution $\beta=0$.

We will be mostly interested in the situation when we assign a stabilizing boundary state to the upper end of the cylinder and a stabilizing boundary condition to the lower end.

Similarly to \cite{CKP}, one can argue that the limit $\epsilon_n\to 0$ of local correlation functions exists and
\[
\lim_{n\to \infty} \mathbb{E}_{M_nN_n}\Big[ \varphi(x_1,y_1) \varphi(x_2,y_2) \cdots \varphi(x_k,y_k) \Big] =\varphi_0(x_1,y_1)\cdots \varphi_0(x_k,y_k)
\]
here $\varphi$ is the random normalized height function on the cylinder $\mathcal{C}_{M_n N_n} $ and $\phi_0(x,y)$ is a function called the {\it limit shape}. Here we also assume appropriate stabilizing boundary conditions.

\subsubsection{The Variational Principle} Denote by $H_{LT}^{(\rho)} $ the space of functions $\varphi $ on $C_{LT} $ satisfying the Lipshitz conditions:
\begin{align*}
& | \varphi(x,y) - \varphi(x',y) | \leq |x -x'| \\
& | \varphi(x,y) - \varphi(x,y') | \leq |y-y'|
\end{align*}
We will call such functions macroscopic height functions.

Based on theorems for dimers in \cite{CKP}, it is natural to conjecture \cite{ZJ}\cite{PR}\cite{RS} that the
the system develops a limit shape, which is reflected in the behaviour of 1-point correlation functions discussed above and that the free energy for the cylinder with fixed stabilizing Dirichlet boundary conditions is given by the following variational
principle. Let $\chi,\chi'\in H_{L}^{(m)}$. Then
\[
f_{LT}[\chi,\chi']=-S[\varphi_0]
\]
where $\varphi_0$ is the minimizer of:
\begin{align}\label{eq:action}
S[\varphi] =&
\int_0^L \int_0^T \left(\sigma( \partial_x \varphi , \partial_y \varphi) + V \partial_x \varphi   + H \partial_y \varphi \right)\; dy \; dx
\end{align}
 in the space $H_{LT}^{(\rho)} $  with boundary conditions $\partial_x \varphi(x,T)=\partial_x \chi(x), \ \partial_x \varphi(x,0)=\partial_x \chi'(x) $. The function $\sigma(s,t)$ is the density of the free
energy for the 6-vertex model on a cylinder with given magnetizations (slopes) $(s,t)$ \cite{LW}. This function is the Legendre transform of the free energy on a cylinder as a function of magnetic fields.

In regions where $\nabla \varphi_0\in (1,-1)\times (1,-1)$ and avoids other singularities of $\sigma$, $\varphi_0$ satisfies the Euler-Lagrange equations:
\begin{equation}\label{ELeq}
\pa_x\sigma_1(\pa_x\varphi_0,\pa_y\varphi_0)+\pa_y\sigma_2(\pa_x\varphi_0,\pa_y\varphi_0)=0
\end{equation}
where $\sigma_1$ and $\sigma_2$ are partial derivatives in the first and the second
argument respectively.

The minimizer $\varphi_0$ is the limit shape, which means that it is the Gibbs measure of the 6-vertex model concentrates about $ \varphi_0 $ with variance exponentially supressed in $ \epsilon^2 $. It determines leading order of local correlation functions, as in the previous section.

For $\beta$-stabilizing boundary state on the upper end and the Dirichlet boundary condition $ \chi \in H_{L}^{(m)} $ on the lower end of the cylinder the free energy is given by\footnote{ The notation $f[\beta,\chi)$ indicates that at the upper boundary we have a $\beta$-stabilizing state and a the
boundary condition stabilizing at $\chi$ at the lower end.}
\[
f_{LT}[\chi,\beta)=\int_0^L\beta(\pa_x\varphi_0(x,T), x ) \; dx -S[\varphi_0]
\]
where $\varphi_0(x,y)$ is the minimizer of
\begin{equation}\label{svar}
S[\varphi]-\int_0^L\beta(\pa_x\varphi(x,T))\;dx
\end{equation}
 in the space $H_{LT}^{(m)} $  with boundary conditions $ \partial_x \phi(x,0)=\partial_x \chi(x) $.

The limit shape $\varphi_0$ satisfies the Euler-Lagrange equations:
\begin{equation}\label{ELeq}
\pa_x\sigma_1(\pa_x\varphi_0,\pa_y\varphi_0)+\pa_y\sigma_2(\pa_x\varphi_0,\pa_y\varphi_0)=0
\end{equation}
along with the boundary condition at $ y = T $:
\[
\sigma_1(\pa_x\varphi_0,\pa_y\varphi_0)+H+\pa_x( \beta_1(\pa_x\varphi_0,x) )=0
\]

\section{The Stochastic Six Vertex Weights and the ASEP Limit} \label{sec:stochastic}

\subsection{Stochastic Weights}
There is a natural discrete time stochastic process associated with the 6-vertex model, defined as follow. The state space $ \mathcal{S}_M^{(m)} $ of the process consist of boundary configurations of paths at the top end of the cylinder. The stochastic process $ \{ X_t \; | \; t \in \mathbb{N} \} $ is then defined by the conditional probabilities:
\begin{align}
\mathbb{P}[ X_t = s' \;  | \; X_0 = s] &= Z_{Mt}^{(m)}(s', s) / \mathcal{Z}_{Mt}[s] \label{eq:stochasticprocess} \\
\mathcal{Z}_t^{(m)}[s] &= \sum_{s'' \in \mathcal{S}} Z_{Mt}^{(m)}(s'',s) \label{eq:freebc}
\end{align}
\begin{remark}
The normalizing factor $ \mathcal{Z} $ is proportional to the partition function with Dirichlet boundary conditions on one end of the cylinder and the uniform distribution on the other.
\end{remark}

Recall that a discrete time stochastic process over a finite state space is \emph{Markov} if it satisfies $$ \mathbb{P}[ X_n = s_n \; | \; X_{n - 1} = s_{n-1}, \cdots, X_0 = s_0 ] = \mathbb{P} [X_1  = s_n \; |  \; X_{0} = s_{n-1}] $$ This means a Markov process is completely defined by "one step" transition probabilities $ \mathbb{P} [X_1  = x' \; |  \; X_{0} = x] = P_{x' x} $, which satisfy:
\begin{equation} \label{eq:markovmatrix}
\begin{aligned}
P_{x' x} \geq 0 \hspace{40pt} & \text{(positivity)} \\
 \sum_{x'} P_{x' x} = 1   \hspace{40pt} & \text{(total probability)}
\end{aligned}
\end{equation}
The total probability rule can be restated by saying that uniform vector $ \mathbbm{1} $  is a co-eigenvector (left eigenvector) of $ P $ with eigenvalue 1.
In other words, a transition matrix defines a Markov if it has nonnegative entries and the uniform distribution is a co-eigenvector of the matrix with the eigenvalue $1$.

\begin{proposition}\label{prop:stochasticweights}
With weights:
\begin{align} \label{eq:stochasticweights}
a_1 = 1 \;\;\; \; a_2 = 1  \;\;\; \; c_1 = 1-b_1 \;\;\; \;c_2 = 1-b_2
\end{align}
for $ b_1, b_2 \in [0,1] $, the process defined by (\ref{eq:stochasticprocess}) is Markov.
\end{proposition}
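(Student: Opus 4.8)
The plan is to verify the two defining properties in \eqref{eq:markovmatrix} for the transition matrix $P$ with entries $P_{s's} = Z^{(m)}_{Mt}(s',s)/\mathcal{Z}^{(m)}_t[s]$, and since the process is defined through powers of the transfer matrix it suffices to treat the one-step case $t=1$, i.e.\ to show that $T_M^{(m)}$ with weights \eqref{eq:stochasticweights} has nonnegative entries and that the uniform co-vector $\mathbbm{1}$ satisfies $\mathbbm{1}^{T} T_M^{(m)} = c\,\mathbbm{1}^{T}$ for a constant $c$ (which can then be normalized away, and which by consistency of the construction will actually equal $1$ after the normalization by $\mathcal{Z}$; in fact with $a_1=a_2=1$ one expects $c=1$ on the nose). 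Positivity is immediate: with the weights \eqref{eq:stochasticweights} and $b_1,b_2\in[0,1]$ all six Boltzmann weights $a_1,a_2,b_1,b_2,c_1,c_2$ are nonnegative, hence every summand $W(S)$ in \eqref{eq:partfunc} is nonnegative and so is each matrix element of $T_M$ and of its powers. The content is therefore the co-eigenvector statement.

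For the co-eigenvector property I would argue locally, at the level of the single $R$-matrix, and then propagate along the row. Writing the transfer matrix as $T_M=\operatorname{Tr}_0(R_{10}R_{20}\cdots R_{M0})$, the claim $\mathbbm{1}^{T}T_M = \mathbbm{1}^{T}$ follows if the $R$-matrix satisfies the local ``sum rule'' that summing its entries over the two outgoing edges (the row-space output and the auxiliary-space output), for each fixed pair of incoming edges, gives $1$. Concretely, with the stochastic weights one checks the four cases according to the incoming configuration at a vertex: if both incoming edges are empty only $a_1=1$ contributes; if both are occupied only $a_2=1$ contributes; and if exactly one incoming edge is occupied the two allowed outcomes carry weights $b_i$ and $c_i=1-b_i$, summing to $1$. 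This is exactly the statement that the matrix obtained from $R$ by regarding one tensor slot as ``in'' and the other as ``out'' is stochastic, and it is the algebraic shadow of the relation $a_i=b_i+c_i$ emphasized after \eqref{eq:abchv}. Telescoping this local identity down the row — summing over the internal vertical edges one vertex at a time, using that the horizontal auxiliary line is traced out — yields $\mathbbm{1}^{T}T_M=\mathbbm{1}^{T}$, and restricting to the invariant subspace $\mathscr{H}_M^{(m)}$ gives $\mathbbm{1}^{T}T_M^{(m)}=\mathbbm{1}^{T}$. Iterating, $\mathbbm{1}^{T}\big(T_M^{(m)}\big)^{N}=\mathbbm{1}^{T}$, so $\mathcal{Z}^{(m)}_{MN}[s]=\sum_{s''}Z^{(m)}_{MN}(s'',s)$ is independent of $s$ (equal to the relevant component count), and the rows of $P$ sum to $1$.

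Finally I would assemble these pieces into the Markov property itself: by construction \eqref{eq:stochasticprocess} the $t$-step transition matrix is $\big(T_M^{(m)}\big)^{t}$ up to the normalization $\mathcal{Z}$, and once we know the normalization is $s$-independent and the uniform vector is a left eigenvector with eigenvalue $1$, the semigroup property $\big(T_M^{(m)}\big)^{t+t'}=\big(T_M^{(m)}\big)^{t}\big(T_M^{(m)}\big)^{t'}$ translates directly into the Chapman--Kolmogorov identity $P^{(t+t')}=P^{(t)}P^{(t')}$, which together with positivity and total probability is the Markov property as formulated around \eqref{eq:markovmatrix}. The main obstacle is not any single computation but bookkeeping: one must be careful that the normalizing factor $\mathcal{Z}$ genuinely cancels consistently across steps (this is precisely where the co-eigenvalue being $1$, rather than merely the co-vector being an eigenvector, is used), and one must track the boundary/monodromy sector $m$ through the trace over the auxiliary space so that the ``free'' boundary condition \eqref{eq:freebc} on the far end is compatible with the cylinder geometry; the detailed verification of the cylindrical case is what is deferred to Appendix A.
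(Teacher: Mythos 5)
Your positivity argument and the overall reduction (show $\mathbbm{1}$ is a left eigenvector of $T_M^{(m)}$, then normalize) are fine, but the central step — telescoping the local sum rule $a_i=b_i+c_i$ along the row to conclude $\mathbbm{1}^{T}T_M=\mathbbm{1}^{T}$ — fails on the cylinder, and with it your claim that the co-eigenvalue is $1$ ``on the nose.'' The local identity does make the single-vertex matrix stochastic from incoming to outgoing edges, and telescoping it would prove stochasticity of the row transfer matrix on a \emph{strip} with a free horizontal boundary. On the cylinder, however, the auxiliary space is traced out, i.e.\ the last horizontal edge is constrained to match the first rather than being summed freely, and this constraint leaves a defect: the row sums of $T_M$ are not $1$ but $1+b_1^{m}b_2^{M-m}$ (Lemma \ref{lem:sum-trans}), the extra term coming from the winding configuration in which every particle hops to its right neighbour's position. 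So $\mathbbm{1}^{T}T_M\neq\mathbbm{1}^{T}$, and the identity you telescope to is simply false for generic $b_1,b_2\in(0,1)$.

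What actually has to be proved — and what is the entire content of the paper's argument — is that this defect term depends only on the particle number $m(\alpha)$ and not on the configuration $\alpha$ itself, so that $\mathbbm{1}$ is still a co-eigenvector of each block $T_M^{(m)}$, with eigenvalue $1+b_1^{m}b_2^{M-m}$; the normalization in \eqref{eq:stochasticprocess} then divides by this constant, not by $1$. The paper establishes this by an explicit path-weight computation (geometric series per particle, followed by removal of overlapping configurations) in Appendix \ref{sec:StoTrans}. Your proposal defers exactly this point to ``Appendix A'' without supplying an argument, which is where the proof lives. (A quick repair along your local lines is available: summing $R$ over the vertical output for a fixed vertical input yields a $2\times2$ column-stochastic matrix $M^{(0)}$ or $M^{(1)}$ acting on the auxiliary edge, and the row sum of $T_M$ is $\operatorname{Tr}\bigl(M^{(\alpha_1)}\cdots M^{(\alpha_M)}\bigr)=1+\lambda_2=1+\prod_i\det M^{(\alpha_i)}=1+b_1^{m}b_2^{M-m}$, manifestly independent of the ordering — but some such argument must be made, and your telescoping as written does not make it.)
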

\noindent
\begin{proof}
Using lemma (\ref{lem:sum-trans}) from Appendix (\ref{sec:StoTrans}), we have for $ \mathcal{Z} $ as defined in (\ref{eq:freebc}):
\begin{align*}
\mathcal{Z}^{(m)}_1 (s) =  ( 1 + b_1^m \; b_2^{M-m} )
\end{align*}
 Note in particular that $ \mathcal{Z}^{(m)}_1(s) $ is independent of $ s $. It is clear from the definition (\ref{eq:stochasticprocess}) that the transition matrix for the process is the normalized transfer matrix
\begin{align*}
P^{(m)} = \frac{1}{1 + b_1^m b_2^{M-m} } T_M^{(m)}
\end{align*}
which satisfies the Markov conditions (\ref{eq:markovmatrix}).
\end{proof}

These values of Boltzmann weights are possible only if $\Delta=(a^2+b^2-c^2)/2ab>1$, i.e. the region where
in Baxter's parametrization $u,\eta>0$. In terms of this parametrization for the stochastic point we have:
\[
a_1=a_2=1, \ \ b_1=\frac{\sinh(u)e^{\pm\eta}}{\sinh(u+\eta)}, \ \ b_2=\frac{\sinh(u)e^{\mp\eta}}{\sinh(u+\eta)}
\]
\[
c_1=\frac{\sinh(\eta)e^{\mp u}}{\sinh(u+\eta)}, \ \ c_2=\frac{\sinh(\eta)e^{\pm u}}{\sinh(u+\eta)}
\]
For these values of weights $H=-V= \pm \eta / 2 $ and $\lambda=e^{\mp u}$.

From now on we will assume $b_2<b_1$ which correspond to $H=\eta>0$ and will call it the stochastic point. The other sign can be obtained by the reflection which exchanges the edges occupied by paths to empty edges. Note that the transfer matrices on the stochastic line form a commutative family.

The Markov property of the transfer matrix, together with the {\it uniform boundary condition at the upper end} and fixed boundary condition $\eta$ at the lower end implies that
\begin{itemize}
\item $\mathcal{Z}_{MN}[\eta]=1$ for any boundary state $\eta$.
\item The correlation functions of the 6-vertex model:
\begin{align*}
\mathbb{E}_{MN}\Big[ \varphi(x_1,y_1) \varphi(x_2,y_2) \cdots \varphi(x_k,y_k) \Big]
\end{align*}
are equal for all $ N \geq \max_i y $.
\end{itemize}

In the thermodynamic limit, the limit shape phenomenon in this case states
that the random variable $\varphi(x,y) $ on the cylinder $ C_{LT} $ converges in probability to a limit shape $\varphi_0(x,y)$. As a consequence the correlation function written above is expected to converge to
\[
\varphi_0(x_1,y_1)\cdots \varphi_0(x_k,y_k)
\]
where $\varphi_0$ is the solution to the variational problem \ref{svar}. We will discuss $\varphi_0$ in the
next sections.

The limit shape $\varphi_0(x,y)$ with this boundary condition is the most probable height
function in the bulk and also the most probable boundary height function at the top boundary.
The limit shape phenomenon means that random height functions concentrate around $\varphi_0$.
The limit shape $\varphi_0$ is completely determined by the boundary conditions at the lower end of the
cylinder. This suggests that it should be determined by a first order PDE (which we will
derive in the next section) and that $\varphi_0$
is the restriction of a solution to this PDE defined for the half infinite cylinder $[0,L] \times \mathbb{R}^+ \rightarrow \mathbb{R}$ to $[0,L]\times [0,T]$.

\subsection{The ASEP Limit} Here we will show that transition probabilities of the \emph{asymmetric exclusion process} (ASEP) can be obtained
from the commuting family of stochastic 6-vertex transfer matrices in a natural limit near the point $u=0$.

\subsubsection{The Asymmetric Exclusion Process on a Ring}
Recall that a continuous time Markov process $ X(t) $ over a finite state space is defined by a transition matrix $ W $ as $
dp_i / dt = \sum_{j=1}^M W_{ij} p_j $, where $ p_i(t) = \mathbb{P}[X(t) = i], i=1,\cdots, M $ is the probability that system is in the $ i $th state at time $ t $. The transition matrix satisfies:
\begin{equation}
\begin{aligned}
W_{ij} > 0 \; \;  \text{for $ i\neq j$ }  \hspace{40pt} & \text{(positivity)} \\
 \sum_{j=1}^M W_{ij} = 0  \hspace{40pt} & \text{(total probability)}
\end{aligned}
\end{equation}
The finite time transition probabilities $ P_{ij}(t) = \mathbb{P}[X(t) = i \; | \; X(0) = j ] $ are determined by the differential equation:
\begin{align*}
\frac{d}{dt} P_{ij}(t) &= \sum_{k=1}^M W_{ik} P_{kj}(t)  \\
P_{ij}(0) &= \delta_{i,j}
\end{align*}
For any fixed $ t > 0 $, they satisfy (\ref{eq:markovmatrix}) and define Markov transition probablilities.
Note that such process is invariant with respect to transformations $t\mapsto t\lambda^{-1}$, $W_{ij}\mapsto W_{ij}\lambda$ for positive $\lambda$. Thus only projective values of $W_{ij}$ are important.

The  ASEP is a continuous time stochastic process describing an interacting particle system. The states of the process are configurations of particles on the ring $ \mathbb{Z} / M \mathbb{Z} $ with at most one particle at each lattice site. Each particle waits an exponentially distributed random time, then chooses the  adjacent lattice site to the left with probability $ p$, or the right with probability $ 1- p $. If the chosen site is empty, the particle jumps over; otherwise it remains at the same site.

The transition matrix for ASEP can be written as:
\begin{align*}
W_{ASEP} = \sum_{i}^{M} H_{i,i+1}(p,q)
\end{align*}
where
\begin{align*}
H(p,q) = \begin{pmatrix}
0 & 0 & 0 & 0 \\
0 & q & -p & 0 \\
0 & -q & p & 0 \\
0 & 0 & 0 & 0
\end{pmatrix}
\end{align*}
$ H_{i,j}(p,q) $ acts as $ H $ on the $ i$th and $ j$th components of the tensor product $ (\mathbb{C}^2)^{\otimes M} $, and $  H_{M,M+1} $ is defined to be $ H_{M,1} $ (periodic boundary conditions).

\subsubsection{The Relation Between ASEP and the Stochastic 6-vertex Model}\label{sec:ASEPLim}
The following is a well known relation between the 6-vertex model and the Heisenberg spin chain.
\begin{proposition} The matrix of transition probabilities for the ASEP model is related to
the transfer matrix of the stochastic 6-vertex models as follows:
\[
W=\sum_{i=1}^MH_{i,i+1}(p,q)=T'_M(0, H,V)T_M(0,H,V)^{-1}-\text{coth}(\eta)I
\]
where $p=\frac{e^{\eta}}{\sinh(\eta)}, \ \ q=\frac{e^{-\eta}}{\sinh(\eta)}$.
Note that after rescaling this covers all possible values $0<q<p<1$. Here $ T'(u,H,V) = \frac{d}{du} T(u,H,V) $.
\end{proposition}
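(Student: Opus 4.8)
The plan is to compute the logarithmic derivative of the transfer matrix at the special point $u=0$ and match it, term by term, against the ASEP generator $W=\sum_{i=1}^M H_{i,i+1}(p,q)$. The essential structural input is that $R(u)$ is a solution of the Yang--Baxter equation with the regularity property $R(0)=\sinh(\eta)\,\Pi$, where $\Pi$ is the permutation operator on $\CC^2\otimes\CC^2$; this is immediate from the explicit matrix in Baxter's parametrization, since at $u=0$ the diagonal entries $\sinh(u+\eta)$ and $\sinh(u)$ become $\sinh(\eta)$ and $0$, and the off-diagonal entries are $\sinh(\eta)$. So $R_{i0}(0)=\sinh(\eta)\,\Pi_{i0}$.

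First I would use the standard ``regular point'' computation for periodic transfer matrices. Writing $t_M(u)=\Tr_0\bigl(D^{2H}_0 R_{10}(u)\,D^{2H}_0 R_{20}(u)\cdots D^{2H}_0 R_{M0}(u)\bigr)$ and using $R_{i0}(0)=\sinh(\eta)\Pi_{i0}$, the repeated use of $\Pi_{i0}A_0\Pi_{i0}=A_i$ and $\Tr_0\Pi_{M0}=I_0\mapsto$ telescoping of the permutations shows that $t_M(0)$ is proportional to the cyclic shift operator on $(\CC^2)^{\otimes M}$ (dressed by the $D^{2H}$ factors, which at the stochastic point $H=\eta$, $V=-\eta$ are the magnetic-field twists). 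Then $T_M(0,H,V)^{-1}$ exists, and $T'_M(0,H,V)T_M(0,H,V)^{-1}$ collapses to a sum of nearest-neighbor terms: the derivative hits exactly one $R_{i0}$ factor, and the shift operator on the remaining factors localizes the result onto sites $i,i+1$. Concretely one gets
\[
T'_M(0)T_M(0)^{-1}=\sum_{i=1}^M \frac{1}{\sinh(\eta)}\,\Pi_{i,i+1}\,\p_u R_{i,i+1}(u)\big|_{u=0}\,,
\]
up to conjugation by the diagonal twist, where the index $M+1$ is identified with $1$.

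Next I would evaluate the local density. From the explicit $R(u)$ one computes $\p_u R(0)$ (diagonal entries $\cosh(\eta),1,1,\cosh(\eta)$; zero off-diagonal), multiply by $\Pi/\sinh(\eta)$, and then conjugate by the diagonal matrix $D^{2H}=D^{2\eta}$ coming from the magnetic field at the stochastic point; this produces the asymmetry $e^{\pm\eta}$ in the off-diagonal $c$-type positions, i.e. exactly the matrix $H(p,q)$ with $p=e^{\eta}/\sinh(\eta)$, $q=e^{-\eta}/\sinh(\eta)$, plus a diagonal contribution. The diagonal contribution from the $\cosh(\eta)$ entries is $\coth(\eta)\,I$ on each pair, but after symmetrizing the sum over $i$ it is $\coth(\eta)$ times the identity on the whole chain (the $a$-type diagonal terms contribute $\coth(\eta)$ uniformly), which is why it is subtracted off: $W=T'_M(0)T_M(0)^{-1}-\coth(\eta)I$. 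One should also check that the diagonal $q$ and $p$ entries of $H(p,q)$ versus the ``stay'' probabilities match, i.e. that $H(p,q)$ indeed has zero column sums, which is the total-probability condition for the generator and is a one-line verification.

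The main obstacle is bookkeeping rather than conceptual: one must carefully track the non-symmetric twist $D^{2H}_0$ sitting between every pair of $R$-factors in $t_M(u)$, since it breaks the naive symmetry of the regular-point argument, and confirm that it conjugates the permutation-localized derivative into precisely $H(p,q)$ with the stated $p,q$ rather than some gauge-equivalent matrix. Equivalently one can first gauge $t_M(u)$ back to the untwisted transfer matrix using the factorization $R=(D^V\otimes D^{H+\alpha})R(u)(D^V\otimes D^{H-\alpha})$ recorded earlier in the paper, perform the clean regular-point computation for $R(u)$, and then reintroduce the twist at the end; I would present the argument in that order to keep the permutation combinatorics transparent. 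The identity $[T_M(u),T_M(v)]=0$ from the Yang--Baxter equation guarantees that $T'_M(0)$ and $T_M(0)^{-1}$ commute, so the expression on the right-hand side is unambiguous and the ASEP generator indeed commutes with the whole family $T_M(u)$, which is the integrability statement implicitly being used.
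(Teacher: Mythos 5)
Your proposal is correct and follows essentially the same route as the paper: expand $R(u)=\sinh(\eta)P\bigl(1+u\tilde H+O(u^2)\bigr)$ at the regular point, identify $T_M(0)$ with a diagonally twisted cyclic shift, and read off $T_M'(0)T_M(0)^{-1}$ as a sum of conjugated local densities $U_{i,i+1}$ built from $\Pi\,\partial_uR(0)/\sinh(\eta)$. The one place your sketch is looser than the paper is the diagonal bookkeeping: the local density is not $\coth(\eta)I+H(p,q)$ on each pair but rather $\coth(\eta)I+\sinh(\eta)(I\otimes A-A\otimes I)+H(p,q)$ with $A=\mathrm{diag}(0,1)$, and the extra piece vanishes only after summing around the ring via the telescoping identity $\sum_{i=1}^M(A_i-A_{i+1})=0$ — this is the precise content behind your remark about "symmetrizing the sum over $i$," and it, not the $\cosh(\eta)$ corner entries alone, is what justifies subtracting a single $\coth(\eta)I$.
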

\begin{proof}
First note that as $u\to 0$ we have
\[
R(u)=\sinh(\eta)P(1+u\tilde{H}+O(u^2))
\]
where $ P $ is the permutation operator, $ P x \otimes y = y \otimes x $ and
\begin{align*}
\tilde{H}=\begin{pmatrix}
\text{coth}(\eta)& 0 & 0 & 0 \\
0 & 0 & \text{csch}(\eta) & 0 \\
0 & \text{csch}(\eta) & 0 & 0 \\
0 & 0 & 0 & \text{coth}(\eta)
\end{pmatrix}
\end{align*}

From here
\[
T_M(0,H,V)=\sinh(\eta)^M (D^{2H+2V}_1\cdots D^{H+V}_M) C
\]
where $C$ is the cyclic permutation
\begin{align} \label{eq:cyclic}
C(x_1\otimes \cdots \otimes x_M)=x_2\otimes \cdots \otimes x_N\otimes x_1
\end{align}  and $D^H$ is the same diagonal matrix as before. For the derivative in $u$ at $u=0$ we have:
\[
T'_M(0, H,V)=\sinh(\eta)^M (D^{2H+2V}_1\cdots D^{H+V}_M)C\sum_{i=1}^M U_{i,i+1}
\]
where
\[
U_{a,b}=D_b^{-H}\tilde{H}_{ba}D_b^{H}
\]
This operator can be written as
\[
U=\text{coth}(\eta)I+\sinh(\eta)(I\otimes A-A\otimes I)+H(p,q)
\]
where
\begin{align*}
p=\frac{e^{\eta}}{\sinh(\eta)}, \ \ q=\frac{e^{-\eta}}{\sinh(\eta)}, \ \ A=\begin{pmatrix}0&0\\0&1\end{pmatrix}
\end{align*}
Taking into account that $\sum_{i=1}^M (A_i-A_{i+1})=0$ we arrive to
\[
W=T'_M(0, H,V)T_M(0,H,V)^{-1}-\text{coth}(\eta) \; I
\]
\end{proof}

\subsubsection{ASEP Height Functions and the Thermodynamical Limit}
To an ASEP state $ S $, we associate a continuous real valued function on $ \mathbb{R} $ as follows. The function $ h_S $ is linear on each interval $ [i-1/2, i+1/2] $ for $ i \in \mathbb{N} $, with slope $ 1 $ if the $ i$th lattice site is occupied by a particle, and slope $ -1 $ otherwise.

\begin{figure}[h]
\begin{tikzpicture}[scale= 1]
\coordinate (e0) at (1.5, 0);
\coordinate (e1) at (0.935235, 1.17275);
\coordinate (e2) at (-0.333781, 1.46239);
\coordinate (e3) at (-1.35145, 0.650826);
\coordinate (e4) at (-1.35145, -0.650826);
\coordinate (e5) at (-0.333781, -1.46239);
\coordinate (e6) at (0.935235, -1.17275);

\draw (0,0) circle (1.5);

\fill[white] (e0) circle (2 pt);
\fill[white] (e1) circle (2 pt);
\fill[white] (e2) circle (2 pt);
\fill[white] (e3) circle (2 pt);
\fill[white] (e4) circle (2 pt);
\fill[white] (e5) circle (2 pt);
\fill[white] (e6) circle (2 pt);

\draw (e0) circle (2 pt);
\draw (e1) circle (2 pt);
\draw (e2) circle (2 pt);
\draw (e3) circle (2 pt);
\draw (e4) circle (2 pt);
\draw (e5) circle (2 pt);
\draw (e6) circle (2 pt);

\fill (e1) circle (2 pt);
\fill (e4) circle (2 pt);
\fill (e5) circle (2 pt);

\end{tikzpicture} \; \; \; \;
\begin{tikzpicture}[scale = 1,baseline]

\draw (0,0)--(6,0);
\coordinate (e0) at (0,0);
\coordinate (e1) at (1,0);
\coordinate (e2) at (2,0);
\coordinate (e3) at (3,0);
\coordinate (e4) at (4,0);
\coordinate (e5) at (5,0);
\coordinate (e6) at (6,0);

\fill[white] (e0) circle (2 pt);
\fill[white] (e1) circle (2 pt);
\fill[white] (e2) circle (2 pt);
\fill[white] (e3) circle (2 pt);
\fill[white] (e4) circle (2 pt);
\fill[white] (e5) circle (2 pt);
\fill[white] (e6) circle (2 pt);

\draw (e0) circle (2 pt);
\draw (e1) circle (2 pt);
\draw (e2) circle (2 pt);
\draw (e3) circle (2 pt);
\draw (e4) circle (2 pt);
\draw (e5) circle (2 pt);
\draw (e6) circle (2 pt);

\fill (e1) circle (2 pt);
\fill (e4) circle (2 pt);
\fill (e5) circle (2 pt);

\draw (0,2)--(.5,2-.5)--(1.5,2.5)--(2.5,2-.5)--(3.5,2-1.5)--(4.5,2-.5)--(5.5,2.5)--(6,2);

\end{tikzpicture}
\caption{An ASEP state and the height function.} \label{fig:asep}
\end{figure}
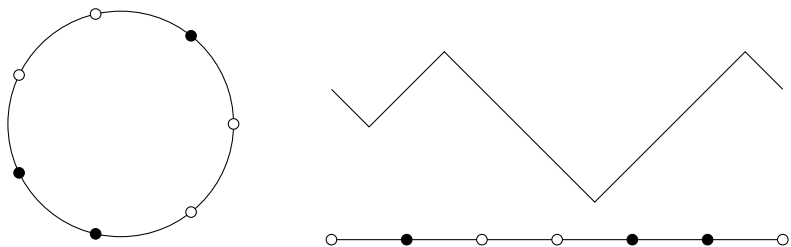 

As in the stochastic 6-vertex model one can introduce the normalized ASEP height function on $ [0,M] $  as $ \phi(x) = \epsilon  h(\lfloor x / \epsilon \rfloor) $, where the mesh $ \epsilon = L/M $. The limit shape phenomenon for the
normalized height function of the ASEP model means that as $\epsilon \to 0$, the normalized random height function $\phi(x,t) = \epsilon  h( \lfloor x / \epsilon \rfloor, t)$ converges in probability to a deterministic function $\phi_0(x,t)$. In particular, we have:
\[
\lim_{\epsilon \to 0} \mathbb{E}_{ \lfloor \frac{N}{\epsilon} \rfloor }\left(\phi(x_1,t_1) \phi(x_2,t_2) \cdots \phi(x_k,t_k) \right)=\phi_0(x_1,t_1)\cdots \phi_0(x_k,t_k)
\]
The function $\phi_0(x,t)$ is called the limit shape. It is been known the limit shape is the solution to the
inviscid Burgers equation \cite{BF} \cite{Rez}. We will see that this also can be obtained as a limit of the
limit shape equation for the stochastic 6-vertex model.

\section{Limit Shapes for the stochastic 6-vertex model}

In this section we will study limit shapes for the stochastic 6-vertex model with the
uniform distribution at the upper end ($y=T$) and Dirichlet boundary conditions on the
the lower end ($y=0$).  The main result of this section is the partial differential equations for the limit shape of the stochastic 6-vertex model.

\subsubsection{The Variational Problem} Consider the variation of the action functional with free boundary condition at the upper end of the cylinder and Dirichlet boundary condition at the lower end.
In what follows $H_0=-V_0=\eta/2$.
\begin{align}\label{eff-action}
\delta S[\varphi ] =&
\int_0^L\int_0^T \Big(\partial_1\sigma( \varphi_x , \varphi_y  )\delta\varphi_x+ \partial_2\sigma( \varphi_x , \varphi_y  )\delta\varphi_y +   \; V_0  \; \delta \varphi_x  +  \; H_0 \; \delta \varphi_y\Big)  \; dx \; dy \\
= & - \int_0^L\int_0^T  \Big(\partial_x \sigma_1( \varphi_x, \varphi_y) + \partial_y \sigma_2( \varphi_x, \varphi_y)\Big) \; \delta \varphi \;  dx \;dy - \int_{0}^L \big(\sigma_2(\varphi_x, \varphi_y) -  \; H_0 \big) \; \delta \varphi \; dx
\end{align}
Here we assumed that $\varphi(x,0)=\chi(x)$ and therefore  $\delta\phi(x,0)=0$. When $\nabla \varphi$ does not hit a singularity of $\sigma(s,t)$ the first term gives the Euler-Lagrange equation:
\begin{align}
\partial_x \sigma_1( \varphi_x, \varphi_y) + \partial_y \sigma_2( \varphi_x, \varphi_y) = 0 \; \; \; \;
\end{align} \label{eq:eom}
The second term gives boundary constraint at $y=T$:
\[
\sigma_2\big(\varphi_x(x,T) , \varphi_y(x,T) \big) -H_0=0 \; \; \; \;
\]

\subsubsection{Degeneration to a Parabolic PDE}
A crucial property of the function $ \sigma $ described in Appendix \ref{sec:Surf} is:

\begin{lemma} \label{zero-He} The Hessian of $\sigma$\footnote{ $\text{Hess}\left( \sigma(s, t) \right) = \sigma_{11}(s, t)\sigma_{22}(s, t)-\sigma_{12}(s, t)^2 $} when $\Delta>1$ is vanishing on the curve $ \sigma_2(s,t) = \pm H_0 $.
\end{lemma}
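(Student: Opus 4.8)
The key object is the surface tension $\sigma(s,t)$, which by the remarks at the end of Section 1 is the Legendre transform of the free energy $f(H,V)$ of the 6-vertex model on a cylinder, viewed as a function of the magnetic fields. More precisely, $s = -\partial_H f$, $t = -\partial_V f$ (the magnetizations), and $\sigma(s,t) = f(H,V) + Hs + Vt$. Concretely I would use the standard fact (recalled in Appendix B) that $\sigma_1(s,t) = -H$ and $\sigma_2(s,t) = -V$ at the point $(s,t)$ corresponding to fields $(H,V)$. In particular the gradient map $(s,t)\mapsto (\sigma_1,\sigma_2) = (-H,-V)$ is the inverse of the magnetization map $(H,V)\mapsto (s,t)$, so the Hessian of $\sigma$ is the inverse of the Hessian of $f$ (up to sign): $\mathrm{Hess}(\sigma)(s,t) = \big(\mathrm{Hess}(f)(H,V)\big)^{-1}$, hence
\[
\mathrm{Hess}(\sigma(s,t)) = \frac{1}{\mathrm{Hess}(f(H,V))}.
\]
So the statement ``$\mathrm{Hess}(\sigma)$ vanishes on the curve $\sigma_2(s,t) = \pm H_0$'' is equivalent to ``$\mathrm{Hess}(f)$ \emph{blows up} as $V \to \mp H_0$,'' i.e. the free energy $f(H,V)$ has a divergent second derivative along the line $V = \mp H_0$ (with $H$ free). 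That is the claim I would actually prove.

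**Key steps.** First I would recall, from Appendix B / Lieb--Wu, the explicit formula for the free energy $f(H,V)$ of the six-vertex model in the disordered/ferroelectric regime $\Delta>1$ that is relevant here — for the stochastic point one has $\Delta>1$ and the free energy on the cylinder in the $M,N\to\infty$ limit is governed by the ground-state Bethe root density, which in the $\Delta>1$ phase is given in closed form (no integral equation needed in the frozen directions). Second, I would locate the phase boundary: for $\Delta > 1$ the six-vertex model has frozen (ferroelectric-type) phases, and the transition line separating a frozen phase from the disordered phase is exactly where one of the magnetizations saturates, i.e. $s = \pm 1$ or $t = \pm 1$. Translating through $\sigma_1 = -H$, $\sigma_2 = -V$, the curve $\sigma_2 = \pm H_0$ in the $(s,t)$-plane is the image of the line $V = \mp H_0$, and since at the stochastic point $H_0 = \eta/2$ this line is precisely a frozen-phase boundary, on it the magnetization $t = \partial$-derivative of $f$ reaches its extremal value $\pm 1$. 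Third, I would show that approaching this boundary the free energy behaves like $f \sim f_{\mathrm{frozen}} - c\,(V \mp H_0)^{3/2} + \dots$ (a square-root-type singularity in the derivative, the standard Pokrovsky--Talapov behavior at a commensurate–incommensurate transition), so that $\partial_V^2 f \to \infty$ there; consequently $\mathrm{Hess}(f) \to \infty$ and $\mathrm{Hess}(\sigma) = 1/\mathrm{Hess}(f) \to 0$. Finally, one checks $\sigma_{12}$ and $\sigma_{22}$ individually remain finite (indeed $\sigma_{22}\to 0$ on the curve while $\sigma_{11},\sigma_{12}$ stay bounded), confirming the Hessian vanishes rather than being indeterminate.

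**Main obstacle.** The technical heart is step three: extracting the precise $3/2$-power (Pokrovsky--Talapov) singularity of $f$ at the frozen-phase boundary and confirming it produces a divergent — not merely discontinuous — second derivative, while keeping careful track of which entry of $\mathrm{Hess}(\sigma)$ degenerates. This requires the asymptotic analysis of the Bethe-ansatz free energy near saturation, which is exactly the kind of computation relegated to Appendix B; I would phrase the lemma's proof as a corollary of the explicit formula for $\sigma$ derived there, reducing it to the algebraic identity $\mathrm{Hess}(\sigma) = 1/\mathrm{Hess}(f)$ together with the stated singular behavior of $f$. A cleaner alternative, if the appendix gives $\sigma$ in closed form, is to differentiate that closed form directly and verify $\sigma_{11}\sigma_{22} = \sigma_{12}^2$ identically on $\{\sigma_2 = \pm H_0\}$ by a finite computation; I would prefer this route if the formula is tractable, and fall back on the Legendre-duality/free-energy-singularity argument otherwise.
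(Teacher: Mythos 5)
Your Legendre-duality framework (vanishing of $\mathrm{Hess}(\sigma)$ at $(s,t)$ corresponding to degeneration of $\mathrm{Hess}(f)$ at the dual point $(H,V)$) is the right starting point, but the mechanism you then invoke is the wrong one for this model, and that breaks the proof. The curve $\sigma_2(s,t)=\pm H_0$ is \emph{not} a frozen-phase boundary where a magnetization saturates: it is the critical line (\ref{eq:critline}), $s=\frac{t\pm\tanh(u+\eta)}{1\pm\tanh(u+\eta)t}$, which runs through the interior of the slope square from $(-1,-1)$ to $(1,1)$ and contains, for instance, the interior point $(s,t)=(\pm\tanh(u+\eta),0)$. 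So $t$ does not reach $\pm1$ on this curve, and the Pokrovsky--Talapov $3/2$-power singularity at saturation, which is the engine of your step three, is not what is happening there; in particular the claim that $\sigma_{22}\to0$ while $\sigma_{11},\sigma_{12}$ stay bounded has no support.

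What actually degenerates is the following. For $\Delta>1$ the free energy $f(H,V)$ has a \emph{conical singularity} at the single point $(H_0,V_0)=(\pm\eta/2,\mp\eta/2)$ (Appendix \ref{sec:Surf}, Figure \ref{fig:phasediag1}). Near that point $f$ is, to leading order, a homogeneous degree-one (cone) function of $(H-H_0,V-V_0)$; its gradient is constant along rays, so a punctured neighborhood of the conical point is mapped by $\nabla f$ onto the one-dimensional critical line, one point per direction of approach. The Hessian of a cone function annihilates the radial direction, i.e.\ is rank one, and dually $\mathrm{Hess}(\sigma)$ acquires a null direction --- hence vanishing determinant --- at every point of the image curve. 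This is why the paper's argument goes through the perturbative expansion of \cite{BS} \emph{around the conical point}, not around a saturation line. Your ``cleaner alternative'' (differentiate the explicit near-critical expansion of $\sigma$ and verify $\sigma_{11}\sigma_{22}=\sigma_{12}^2$ on the curve by direct computation) is essentially what the paper does; the fix is to replace your steps two and three by that computation, or by the rank-one-cone argument just described.
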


\noindent As a consequence, we have
\begin{lemma} \label{lem:peq} The constant:
\begin{align*}
\sigma_2\big(\varphi_x(x,y) , \varphi_y(x,y) \big) = H_0 \; \; \; \; & \text{(for all $ (x,y) \in C_{LT} $)}
\end{align*}
is consistent with the Euler-Lagrange equations. This equation describes the minimizer of the
functional $S[\varphi]$ which
satisfies the free boundary condition at any given $y$.
\end{lemma}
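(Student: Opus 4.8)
The plan is to verify directly that the first-order relation $\sigma_2(\varphi_x,\varphi_y)\equiv H_0$ forces the second-order Euler--Lagrange equation (\ref{ELeq}), so that this codimension-one ansatz is compatible with the full variational problem and is moreover selected by the natural (free) boundary condition at \emph{every} height $y$. First I would put (\ref{ELeq}) in quasilinear form via the chain rule: writing $\sigma_{ij}$ for the second partials of $\sigma$ evaluated at $(\varphi_x,\varphi_y)$, one has $\partial_x\sigma_1(\varphi_x,\varphi_y)=\sigma_{11}\varphi_{xx}+\sigma_{12}\varphi_{xy}$ and $\partial_y\sigma_2(\varphi_x,\varphi_y)=\sigma_{12}\varphi_{xy}+\sigma_{22}\varphi_{yy}$, so that (\ref{ELeq}) reads
\[
\sigma_{11}\varphi_{xx}+2\sigma_{12}\varphi_{xy}+\sigma_{22}\varphi_{yy}=0 .
\]

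Next, assuming $\sigma_2(\varphi_x,\varphi_y)\equiv H_0$, I differentiate this identity in $x$ and in $y$ to obtain the two linear relations
\[
\sigma_{12}\varphi_{xx}+\sigma_{22}\varphi_{xy}=0,\qquad \sigma_{12}\varphi_{xy}+\sigma_{22}\varphi_{yy}=0 .
\]
On the open region where $\sigma_{22}(\varphi_x,\varphi_y)\neq0$ these give $\varphi_{xy}=-(\sigma_{12}/\sigma_{22})\varphi_{xx}$ and $\varphi_{yy}=(\sigma_{12}/\sigma_{22})^2\varphi_{xx}$, and substituting into the quasilinear form collapses its left-hand side to
\[
\frac{\varphi_{xx}}{\sigma_{22}}\bigl(\sigma_{11}\sigma_{22}-\sigma_{12}^2\bigr)=\frac{\varphi_{xx}}{\sigma_{22}}\,\text{Hess}\bigl(\sigma(\varphi_x,\varphi_y)\bigr).
\]
By Lemma~\ref{zero-He} the Hessian of $\sigma$ vanishes identically on the locus $\sigma_2=H_0$, so this expression is $0$ and (\ref{ELeq}) is satisfied. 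The degenerate stratum $\sigma_{22}=0$ is handled separately: there the vanishing of the Hessian forces $\sigma_{12}=0$ as well, the two linear relations become trivial and the quasilinear equation reduces to $\sigma_{11}\varphi_{xx}=0$, which one either checks by hand or sidesteps by noting that this is a lower-dimensional set of slopes and passing to the limit by continuity. Finally, since the relation $\sigma_2(\varphi_x,\varphi_y)=H_0$ holds at every $y$, it agrees with the natural boundary term $-(\sigma_2-H_0)\,\delta\varphi$ read off from (\ref{eff-action}) regardless of the level at which the free end of the cylinder is placed; together with the Dirichlet data $\partial_x\varphi(x,0)=\partial_x\chi(x)$ on the lower end, this first-order equation therefore describes the critical point compatible with a free upper boundary at any height.

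I expect the algebra above to be routine once Lemma~\ref{zero-He} is in hand; the genuine content lies in the two points that keep the statement non-vacuous. First, one should argue that the first-order PDE $\sigma_2(\varphi_x,\varphi_y)=H_0$ is actually solvable with the prescribed lower boundary slope --- it is of Burgers type and hence solvable by characteristics, as developed in the next section --- and that its solution remains in the region $\nabla\varphi\in(-1,1)\times(-1,1)$ where $\sigma$ is smooth and the Euler--Lagrange derivation is legitimate. Second, for the \emph{minimizer} claim one invokes convexity of $\sigma$ (it is a Legendre transform of the free energy), which makes $S$ convex, so the critical point just constructed is a minimizer, while uniqueness of the minimizer compatible with a free upper end follows from the unique solvability of the first-order PDE. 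Controlling the $\sigma_{22}=0$ locus carefully is the remaining technical loose end.
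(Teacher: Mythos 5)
Your proposal is correct and follows essentially the same route as the paper: both arguments rest on the chain rule together with Lemma~\ref{zero-He}, using the vanishing Hessian on the locus $\sigma_2=H_0$ to eliminate $\sigma_{11}=\sigma_{12}^2/\sigma_{22}$, after which the Euler--Lagrange expression collapses to a multiple of $\partial_x\sigma_2$ (equivalently, of the Hessian) and vanishes. The only differences are the direction of the implication (you verify that the first-order constraint implies the EL equation, while the paper shows the EL flow preserves the constraint) and your extra care with the $\sigma_{22}=0$ stratum and the minimizer claim, which the paper leaves implicit.
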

\begin{proof}
Starting with the bulk equation of motion (\ref{eq:eom}):
\begin{align*}
\partial_y \sigma_2 &= -\partial_x \sigma_1 \\
&= -\sigma_{11} \varphi_{xx} - \sigma_{12} \varphi_{xy} \\
&= - \frac{\sigma_{12}^2}{\sigma_{22}} \varphi_{xx} - \sigma_{12} \varphi_{xy}
\end{align*}
Here we used $ \sigma_{11} \sigma_{22} - \sigma_{12}^2 = 0 $ to eliminate $ \sigma_{11} $.
\begin{align*}
&= -\frac{\sigma_{12}}{\sigma_{22} } \left( \sigma_{12} \varphi_{xx} + \sigma_{22} \varphi_{xy} \right) \\
&=  -\frac{\sigma_{12}}{\sigma_{22} } \left( \partial_x \sigma_2 \right) \\
&= 0
\end{align*}
\end{proof}

\begin{remark}
The fact that the uniform boundary condition gives the constraint $ \sigma_2\big(\varphi_x , \varphi_y \big) = H_0 $ that is preserved by the equations of motion is analagous (and is related) to the fact that the uniform boundary condition is a coeigenvector of the Markov process with maximal eigenvalue.
\end{remark}

Using the lemma, the first term in the bulk equation (\ref{eq:eom}) vanishes, and the limit shape equation becomes parabolic:
\begin{align*}
0 &=\partial_x \sigma_1(\varphi_x, \varphi_y)  \\
&= \sigma_{11}(\varphi_x, \varphi_y) \; \varphi_{xx} +\sigma_{12}( \varphi_x,\varphi_y) \; \varphi_{xy}
\end{align*}
The partial derivatives of the surface tension can be computed exactly on the critical line \cite{BS}. Alternatively, we can exploit the exact formula for the shape of the line (\ref{eq:critline}):
\begin{proposition}
Limit Shape Equation:
\begin{align} \label{eq:limshape}
\rho_y = - \frac{v^2 -1}{( 1 + v \rho )^2 } \; \rho_x
\end{align}
where $ \rho = \varphi_x $ and:
\begin{align*}
v = \frac{b_1- b_2}{2 - b_1 - b_2}
\end{align*}
Note that in Baxter's parametrization $v=\tanh(u+\eta)$.
\end{proposition}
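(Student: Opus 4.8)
The plan is to derive the limit shape equation \eqref{eq:limshape} by rewriting the parabolic equation $\sigma_{11}\varphi_{xx} + \sigma_{12}\varphi_{xy} = 0$ established in Lemma \ref{lem:peq} in terms of the slope variable $\rho = \varphi_x$, using the explicit description of the critical curve $\sigma_2(s,t) = H_0$. Writing $\rho_y = \varphi_{xy}$ and $\rho_x = \varphi_{xx}$, the equation becomes $\rho_y = -(\sigma_{11}/\sigma_{12})\,\rho_x$, so the entire content of the proposition is the identity $\sigma_{11}/\sigma_{12} = (v^2-1)/(1+v\rho)^2$ evaluated along the critical curve, where $\rho = s$ is the horizontal slope. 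The key point is that on this curve the pair $(s,t)$ is constrained, so $t$ is an explicit function of $s$; I expect the formula \eqref{eq:critline} for the shape of the critical line (stated in Appendix \ref{sec:Surf}) to express this constraint, presumably in a form like $1 + vs = $ (something) determining $t = t(s)$, with $v = \tanh(u+\eta) = (b_1-b_2)/(2-b_1-b_2)$.

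First I would recall from Appendix \ref{sec:Surf} the exact expressions for the first partial derivatives $\sigma_1, \sigma_2$ of the surface tension along (or near) the critical line, together with the equation \eqref{eq:critline} of that line. Differentiating the relation $\sigma_2(s,t(s)) = H_0$ along the curve gives $\sigma_{21} + \sigma_{22}\,t'(s) = 0$, i.e. $t'(s) = -\sigma_{12}/\sigma_{22}$. Next I would use the degeneracy of the Hessian from Lemma \ref{zero-He}, $\sigma_{11}\sigma_{22} = \sigma_{12}^2$, which on the critical line lets me write $\sigma_{11}/\sigma_{12} = \sigma_{12}/\sigma_{22} = -t'(s)$. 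Therefore the limit shape equation takes the remarkably clean form $\rho_y = t'(\rho)\,\rho_x$ evaluated at $s=\rho$, and everything reduces to computing the single-variable function $t'(s)$ from the known shape of the critical curve. This is the conceptual heart of the argument, and it explains why the equation is first order: the constraint forces $\nabla\varphi_0$ to lie on a one-dimensional curve, so the slope is effectively governed by one scalar function.

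The remaining step is the explicit computation: substitute the formula \eqref{eq:critline} for the critical line, solve for $t$ as a function of $s$, differentiate, and check that $t'(s) = -(s^2 \text{-dependence})$ collapses to $-(v^2-1)/(1+vs)^2$ after using $v = \tanh(u+\eta)$ and the stochastic-weight identities $a_i = b_i + c_i$, $a_1 = a_2 = 1$, which give $b_1 = \sinh(u)e^{\eta}/\sinh(u+\eta)$, $b_2 = \sinh(u)e^{-\eta}/\sinh(u+\eta)$ and hence $v = (b_1-b_2)/(2-b_1-b_2)$. I would also need to confirm the sign and the branch: the critical curve should be parametrized so that $\rho = s$ ranges over an admissible interval inside $(-1,1)$, and the denominator $1 + v\rho$ stays positive there. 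The main obstacle is purely computational bookkeeping — getting the hyperbolic-function algebra for $\sigma_1,\sigma_2$ and the critical line from Appendix \ref{sec:Surf} into a form where the ratio simplifies — rather than anything conceptually deep; once $t(s)$ is in hand, \eqref{eq:limshape} follows by one differentiation and the Legendre/Hessian identity already proved.
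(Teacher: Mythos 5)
Your argument is correct and lands on the same final computation as the paper, but the route to the key reduction $\rho_y = t'(\rho)\,\rho_x$ is different. The paper's proof is shorter and more elementary: since Lemma \ref{lem:peq} forces $\sigma_2(\varphi_x,\varphi_y)=H_0$ everywhere, the gradient lies pointwise on the critical line, so $\varphi_y = t(\varphi_x)$ with $t(s)=\bigl(s+\tanh(u+\eta)\bigr)/\bigl(1+\tanh(u+\eta)\,s\bigr)$; differentiating this identity in $x$ and using $\varphi_{yx}=\varphi_{xy}$ immediately gives $\rho_y=t'(\rho)\,\rho_x$, with no further input from the Euler--Lagrange equation or the Hessian at this step. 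You instead start from the parabolic equation $\sigma_{11}\varphi_{xx}+\sigma_{12}\varphi_{xy}=0$, invoke the degeneracy $\sigma_{11}\sigma_{22}=\sigma_{12}^2$ of Lemma \ref{zero-He} to write $\sigma_{11}/\sigma_{12}=\sigma_{12}/\sigma_{22}$, and then implicitly differentiate the constraint along the critical curve to identify this ratio with $-t'(s)$. That chain is valid (modulo the implicit nonvanishing of $\sigma_{12}$ and $\sigma_{22}$ on the critical line, divisions the paper's route never performs), and it has the merit of making explicit why the second-order variational problem closes into a first-order scalar conservation law; but it is strictly more machinery than needed once the pointwise relation $\varphi_y=t(\varphi_x)$ is in hand. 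The final step is the same in both approaches and is lighter than you anticipate: $t'(s)=(1-v^2)/(1+vs)^2$ is a one-line derivative of the M\"obius map \eqref{eq:critline}, so no further hyperbolic-function algebra or perturbative data for $\sigma_1,\sigma_2$ from Appendix \ref{sec:Surf} is required beyond the identification $v=\tanh(u+\eta)$ and the sign/branch check you correctly flag.
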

\begin{proof}
Since $ \sigma_2(\varphi_x, \varphi_y) = H_0 $, the slopes must lie they must lie on the critical line (\ref{eq:critline}):
\begin{align}
\varphi_y = \frac{\varphi_x + \tanh(u+\eta)}{1 + \tanh(u+\eta) \varphi_x }
\end{align}

Differentiating with respect to $ x $ gives:
\begin{align*}
\varphi_{xy} &= \frac{1 - \tanh(u+\eta)^2}{ (1 + \tanh(u+\eta) \varphi_{x})^2 } \; \varphi_{xx}
\end{align*}
Substituting $ \rho = \varphi_x $  gives (\ref{eq:limshape}).
\end{proof}

Note that we are interested in periodic solutions to (\ref{eq:limshape}) with $L$-periodic initial conditions.

\subsection{The ASEP Limit and Burgers Equations}
It is well known \cite{BF}\cite{Rez}\cite{GS} that in the hydrodynamic limit of ASEP, the time evolution of the ASEP height function is described by the inviscid Burgers' equation. This agrees with our result after taking in (\ref{eq:limshape}) the ASEP limit as described in section (\ref{sec:ASEPLim}).

Letting $ b_1 = \epsilon p $ and $ b_2 = \epsilon q $, we expand (\ref{eq:limshape}) in $\epsilon $:
\begin{align*}
\rho_t = \rho_x - \frac{1}{2} \epsilon (p-q) \rho \rho_x + O(\epsilon^2)
\end{align*}
The limit $\epsilon\to 0$ corresponds to $u\to 0$.
The first term $ \rho_t = \rho_x $ is the transport equation and is the continuum limit of the cyclic shift operator (\ref{eq:cyclic}).

They can be removed with a change of coordinates to $ (\w x, \w y) $:
\begin{align*}
x &= \w x & y &= \epsilon \w y + x \\
\partial_x &= \partial_{\w x} & \partial_y &= \epsilon \partial_{\w y} + \partial_{\w x}
\end{align*}
Then in the limit $ \epsilon \rightarrow 0 $:
\begin{align*}
\rho_{\w y} =\frac{1}{2} (p-q) \; \rho \; \rho_{\w x}
\end{align*}

Note that in Baxter's parametrization $\frac{1}{2}(p-q)=1$
\section{Solutions to the Limit Shape equations}
Recall $ \rho = \partial_x \varphi $ satisfies $ \rho \in [-1, 1] $. It is useful to write the limit shape PDE in conservation form:
\begin{align} \label{eq:cons}
 \rho_y &= \partial_x \left( F(\rho) \right) \\
&=  f(\rho) \;  \rho_x
\end{align}
where $ f $ is as given in (\ref{eq:limshape}), and $ F $ is an antiderivative.

Note that when solutions are smooth and $ F $ is strictly convex, the PDE can be transformed into the inviscid Burgers' equation, letting:
\begin{align} \label{eq:invburg}
u(x,y) = f(\rho(x,y) )
\end{align}
Then $ \rho $ satisfies (\ref{eq:cons}) if and only if $ u$ satisfies $u_y = u u_x $.

\subsection{Characteristics for Conservation Equations}
Equations of form (\ref{eq:cons}) can be solved by the method of characteristics. Assume the initial condition $  \rho(x,0) = \rho_0(x) $. Then, assuming the solution is smooth, the solution $ \rho(x,y) $ is constant on the line in spacetime that passes through $ (x_0,0) $ with slope $ dx/dy = f( \rho_0(x_0) ) $, i.e. on the line:
\begin{align*}
x = f \left( \rho(x_0) \right) y + x_0
\end{align*}

\begin{figure}[H]
\begin{tikzpicture}[ scale = 1.5]
\draw[ultra thin, gray, decoration={markings,mark=at position 1 with {\arrow[scale=2]{>}}},
    postaction={decorate}](0,-.8) -- (0,.9);
\draw[ ultra thin, gray, decoration={markings,mark=at position 1 with {\arrow[scale=2]{>}}},
    postaction={decorate}](-1.4,0) -- (1.4,0);

\node at  (1.6,0) { $ x $ };
\node at  (0,1.1) { $ \rho_0 $ };

\draw [domain = -1.3:1.3, smooth] plot ({\x},{1/2-1/(1+exp(3*\x))});
\end{tikzpicture} \; \; \; \; \; \; \;
\begin{tikzpicture}[ scale = 1.5]
\draw[ultra thin, gray, decoration={markings,mark=at position 1 with {\arrow[scale=2]{>}}},
    postaction={decorate}](0,0) -- (0,1.7);
\draw[ ultra thin, gray, decoration={markings,mark=at position 1 with {\arrow[scale=2]{>}}},
    postaction={decorate}](-1.4,0) -- (1.4,0);

\node at  (1.6,0) { $ x $ };
\node at  (0,1.9) { $ y $ };

\begin{scope}
\clip (-1.3,0) rectangle (1.3,1.5);

\draw (.3,0)-- (.3+.2,2);
\draw (.6,0)-- (.6+.5,2);
\draw (.9,0)-- (.9+.7,2);
\draw (1.2,0)--(1.2+.75,2);

\draw (0,0)--(0,2);

\draw (-.3,0)--(-.3-.2,2);
\draw (-.6,0)--(-.6-.5, 2);
\draw (-.9,0)--(-.9-.7,2);
\draw (-1.2,0)--(-1.2-.75,2);
\end{scope}
\end{tikzpicture}
 
\caption{An initial condition $ \rho_0 $ and its characteristic lines with $ f(\rho) = \rho $ } \label{fig:shock}
\end{figure}
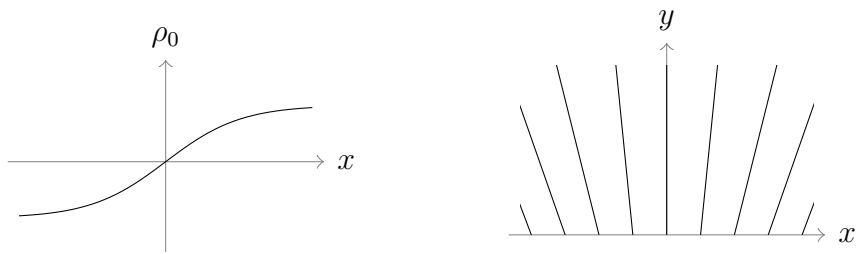

Solutions to the Burgers equations with generic initial conditions typically develop discontinuities at later times. These discontinuities correspond to regions where the characteristics fail, either giving multiple solutions or no solutions. We instead look for weak solutions to the PDE \cite{Evans}.

\subsubsection{Shocks} Intersecting characteristic lines give multiple values $ \rho $ for each point in space time. Thee correspond to a jump discontinuity in $ \rho $ called a shock. The precise location of the shock within a region of intersecting characteristics  is determined by the \emph{Rankine-Hugoniot} condition for the propogation of the shock. If $ s(t) $ is the $x$ coordinate of shock then:
\begin{align*}
\frac{ds}{dy}(y) =\frac{ F(\rho_L(y)) - F(\rho_R(y))}{\rho_L(y) - \rho_R(y)} \hspace{30pt}   \rho_{L,R}(y) = \lim_{x \rightarrow s(y)^{\mp} } \rho(x,y)
\end{align*}
Here $ \rho_L $ and $ \rho_R $ are the values of $ \rho $ just to the left and right side of the jump discontinuity. Note that the jump condition and propogation of the shock is not preserved by the transformation (\ref{eq:invburg}).

\begin{figure}[h]
\begin{tikzpicture}[ scale = 1.5]
\draw[ultra thin, gray, decoration={markings,mark=at position 1 with {\arrow[scale=2]{>}}},
    postaction={decorate}](0,0) -- (0,1.7);
\draw[ ultra thin, gray, decoration={markings,mark=at position 1 with {\arrow[scale=2]{>}}},
    postaction={decorate}](-1.4,0) -- (1.4,0);

\node at  (1.6,0) { $ x $ };
\node at  (0,1.9) { $  y $ };

\draw [domain = -1:-.1, smooth, ultra thick] plot ({\x-.05},{1.75*sqrt(-\x)});

\draw (0,0) -- (-.1-.05,0.553399);
\draw (.3,0)--(-.1-.05,0.553399);
\draw (.6,0)--(-.3-.05,0.958515);
\draw (.9,0)--(-.6-.05,1.35555);
\draw (1.2,0)--(-.90,1.61342);

\draw (-.3,0)--(-.3-.05,0.958515);
\draw (-.6,0)--(-.6-.05,1.35555);
\draw (-.9,0)--(-.90,1.61342);
\end{tikzpicture}
 
\caption{A shock (drawn in bold) caused by the intersecting characteristics.} \label{fig:shock}
\end{figure}
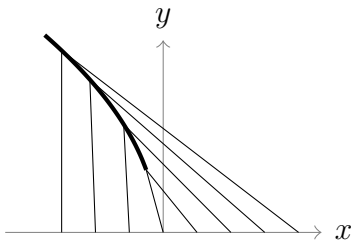

\subsubsection{Rarefaction Fans} Diverging characteristics lead to regions of spacetime where $ \rho $ is not determined by the characteristics. The correct solution for $ \rho $ in these regions, called rarefaction fans, is given by the \emph{Lax-Oleinik} solution:
\begin{align*}
\rho(x,y) = f^{-1}\left( \frac{x}{y} \right) \hspace{30pt} f( \rho_L) < \frac{x}{y} < f(\rho_R)
\end{align*}
Here $ \rho_+ $ and $ \rho_- $ are the values of $ \rho $ just beyond the boundary of the rarefaction fan; in other words $ F( \rho_L) $ and $ F( \rho_R) $ are the slopes of the characteristics bounding the fan.
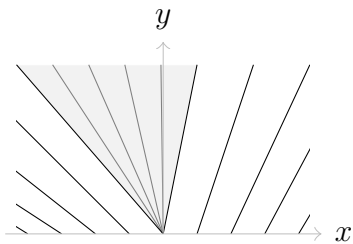
\begin{figure}[h]
\begin{tikzpicture}[ scale = 1.5]

\node at  (1.6,0) { $ x $ };
\node at  (0,1.9) { $ y $ };

\begin{scope}

\draw [white, fill = black!5!white] (0,0) -- (-1.3,1.5) --  ( 0.3,1.5) -- (0,0);

\clip (-1.3,0) rectangle (1.3,1.5);

\draw (-.3,0)-- (-.3-1.5,1.5);
\draw (-.6,0)--  (-.6-1.9,1.5);
\draw (-.9,0)--  (-.9-2.3,1.5);
\draw (-1.2,0)-- (-.9-2.7,1.5);

\draw [gray] (0,0) -- (-1.3+.32,1.5);
\draw [gray] (0,0) -- (-1.3+2*.32,1.5);
\draw [gray] (0,0) -- (-1.3+3*.32,1.5);
\draw [gray] (0,0) -- (-1.3+4*.32,1.5);

\draw (0,0) -- (-1.3,1.5);
\draw (0,0) -- ( 0.3,1.5);

\draw (.3,0)--(0.3+.5,1.5);
\draw (.6,0)-- (.6+.7,1.5);
\draw (.9,0)--(.9+.8, 1.5);
\draw (1.2,0)--(1.2+.9, 1.5);

\end{scope}

\draw[ultra thin, lightgray, decoration={markings,mark=at position 1 with {\arrow[scale=2]{>}}},
    postaction={decorate}](0,0) -- (0,1.7);
\draw[ ultra thin, lightgray, decoration={markings,mark=at position 1 with {\arrow[scale=2]{>}}},
    postaction={decorate}](-1.4,0) -- (1.4,0);

\end{tikzpicture}
 
\caption{A rarefaction fan and its characteristics in gray.} \label{fig:shock}
\end{figure}

\subsection{Limit Shapes of the Stochastic Six Vertex Model with Domain Wall Initial Conditions} Recall that for the stochastic 6-vertex model:
\begin{align*}
 F(\rho) = \frac{1}{v} \; \frac{1-v^2}{1 + \rho \; v} \hspace{30pt} f(\rho) = F'(\rho) = -\frac{1-v^2}{(1 + v \rho)^2}
\end{align*}
where
\begin{align*}
 v = \frac{b_2 - b_1}{2 - b_2 - b_1}
\end{align*}
Without loss of generality we assume $ 0 < b_1 < b_2 < 1 $ so that $ v \in (0,1) $.  Then $ F $ is strictly convex since $ F''(\rho) = f'(\rho) > 0 $, see Figure \ref{fig:Ff}.

The critical values of of $ \rho = \pm 1 $ correspond to "frozen" phases of the 6-vertex model.  For these:
\begin{align*}
f(-1) &= \frac{v+1}{v-1} = \frac{b_2 - 1}{b_1 - 1} \\
f(1) &= \frac{v-1}{v+1} = \frac{b_1 - 1}{b_2 - 1}
\end{align*}

The inverse of $ f $ is easily calculated:
\begin{align} \label{eq:fi}
f^{-1}(x) = \frac{-1 +  \sqrt{\frac{1-v^2}{|x| } }}{v}
\end{align}
for $ f(-1) < x < f(1) < 0 $.  For reference we also calculate:
\begin{align*}
F(f^{-1}(x) ) = \sqrt{\left( 1- \frac{1}{v^2} \right) |x| }
\end{align*}

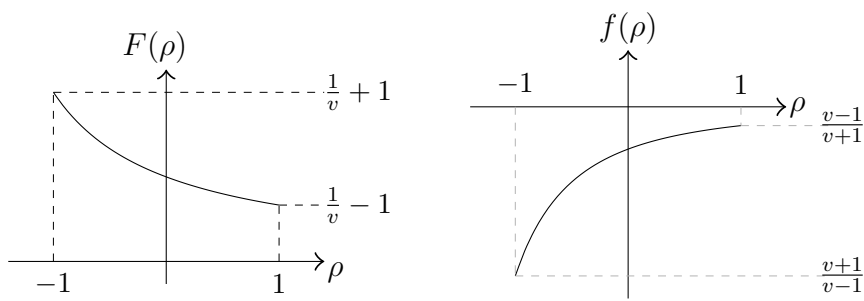
\begin{figure}[H]
\begin{tikzpicture}[ scale = 1.5]
\draw[thin, decoration={markings,mark=at position 1 with {\arrow[scale=2]{>}}},
    postaction={decorate}](0,-.2) -- (0,1.7);
\draw[ thin, decoration={markings,mark=at position 1 with {\arrow[scale=2]{>}}},
    postaction={decorate}](-1.4,0) -- (1.4,0);

\node at  (1.5,-.1) { $ \rho $ };
\node at  (-.1,1.9) { $ F(\rho) $ };

\draw [domain = -1:1, smooth] plot ({\x},{.75/( 1 + \x*.5)});

\draw [ultra thin, dashed] (-1,1.5)--(1.4,1.5);
\draw [ultra thin, dashed] (1,.5)--(1.4,.5);
\node at (1.7,.5) { \small{$ \frac{1}{v} -1 $} };
\node at (1.7,1.5) { \small{$ \frac{1}{v} +1 $} };

\draw [ultra thin, dashed] (-1,0)--(-1,1.5);
\draw [ultra thin, dashed] (1,0)--(1,.5);
\node at (1,-.20) { \small{$ 1 $} };
\node at (-1,-.20) { \small{$ -1 $} };

\end{tikzpicture} \; \; \;
\begin{tikzpicture}[scale = 1.5]
\draw[thin, decoration={markings,mark=at position 1 with {\arrow[scale=2]{>}}},
    postaction={decorate}](0,-1.7) -- (0,.5);
\draw[ thin, decoration={markings,mark=at position 1 with {\arrow[scale=2]{>}}},
    postaction={decorate}](-1.4,0) -- (1.4,0);

\node at  (1.5,0) { $ \rho $ };
\node at  (0,.7) { $ f(\rho) $ };

\draw [domain = -1:1, smooth] plot ({\x},{-.375/(( 1 + \x*.5)*( 1 + \x*.5))});

\draw [ultra thin, black!30!white, dashed] (-1,-1.5)--(1.8,-1.5); \node at (1.9,-1.5) { \small{$ \frac{v+1}{v-1} $} };
\draw [ultra thin, black!30!white, dashed] (1,-.16666)--(1.8,-.16666); \node at (1.9,-.166666) { \small{$ \frac{v-1}{v+1} $} };

\draw [ultra thin, black!30!white, dashed] (-1,0)--(-1,-1.5);
\draw [ultra thin, black!30!white, dashed] (1,0)--(1,-.1666666);
\node at (1,.20) { \small{$ 1 $} };
\node at (-1,.20) { \small{$ -1 $} };

\end{tikzpicture}
 
\caption{Plots of $ F(\rho) $ and $ f(\rho) $.} \label{fig:Ff}
\end{figure}

\subsubsection{Example 1}
Consider the step initial conditions on a line ($L=\infty, x\in \mathbb{R}$)
\begin{align*}
\rho_0  =  \begin{cases} \rho_R \; & x \geq 0 \\ \rho_L \; & x < 0 \end{cases}
\end{align*}
When $\rho_L>\rho_R$ a shock forms and  propagates from the origin with constant velocity:
\begin{align*}
\frac{ds}{dy} = \frac{F(\rho_L) - F(\rho_R)}{\rho_L-\rho_R} =v_0
\end{align*}
The corresponding solution is:
\begin{align*}
\rho(x,y) = \begin{cases} \rho_R \; & x/y \geq v_0 \\ \rho_L \; & x/y < v_0 \end{cases}
\end{align*}
\begin{figure}[H]
\begin{tikzpicture}[ scale = 1.5]
\draw[ultra thin, gray, decoration={markings,mark=at position 1 with {\arrow[scale=2]{>}}},
    postaction={decorate}](0,-.7) -- (0,.7);
\draw[ ultra thin, gray, decoration={markings,mark=at position 1 with {\arrow[scale=2]{>}}},
    postaction={decorate}](-1.4,0) -- (1.4,0);

\node at  (1.55,0) { $ x $ };
\node at  (0,.85) { $ \rho_0 $ };

\draw (1.2,-.5)--(0,-.5);
\draw (0,.5)--(-1.2,.5);
\end{tikzpicture}  \; \; \; \; \; \; \;
\begin{tikzpicture}[ scale = 1.5]
\draw[ultra thin, gray, decoration={markings,mark=at position 1 with {\arrow[scale=2]{>}}},
    postaction={decorate}](0,0) -- (0,1.7);
\draw[ ultra thin, gray, decoration={markings,mark=at position 1 with {\arrow[scale=2]{>}}},
    postaction={decorate}](-1.4,0) -- (1.4,0);

\node at  (1.6,0) { $ x $ };
\node at  (0,1.9) { $ y $ };

\draw [ultra thick] (0,0)--(-0.824176,1.5);
\draw (.3,0)-- (-0.3500,0.637002);
\draw (.6,0)-- (-0.70000,1.274);
\draw (.9,0)-- (.9-1.53061,1.5);
\draw (1.2,0)--(1.2-1.53061,1.5);

\draw (-.3,0)--(-0.3500,0.637002);
\draw (-.6,0)-- (-0.70000,1.274);
\draw (-.9,0)--(-.9-0.117739, 1.5);
\draw (-1.2,0)--(-1.2-0.117739, 1.5);
\end{tikzpicture}
\end{figure}
\noindent When $\rho_R=-1$ and $\rho_L=1$ we have $v_0=-1$.
\subsubsection{Example 2}
Consider now the step initial conditions on a line with $\rho_L<\rho_R$. In this case
characteristic solutions do not span the region $f(\rho_L)<x/y<f(\rho_R)$ and instead
we have a rarefaction fan propagating in this region from the origin. Using the Lax-Oleinik formula and (\ref{eq:fi}), we have the general solution:
\begin{align*}
\rho(x,y) = \begin{cases} \rho_R \; & x/y > f(\rho_R) \\  \frac{-1 +  \sqrt{\frac{1-v^2}{x/y} }}{v}  \; & f(\rho_L) < x/y < f(\rho_R)  \\ f(\rho_L) \; & x/y < f(\rho_L) \end{cases}
\end{align*}
For critical densities $\rho_L=-1$, $\rho_R=1$ we have
\[
f(\rho_L)=\frac{v-1}{v+1}, \ \ f(\rho_R)=\frac{v+1}{v-1}
\]
and this solution correspond to the limit shape derived in \cite{BCG}.
\begin{figure}[H]
\begin{tikzpicture}[ scale = 1.5]
\draw[ultra thin, gray, decoration={markings,mark=at position 1 with {\arrow[scale=2]{>}}},
    postaction={decorate}](0,-.7) -- (0,.7);
\draw[ ultra thin, gray, decoration={markings,mark=at position 1 with {\arrow[scale=2]{>}}},
    postaction={decorate}](-1.4,0) -- (1.4,0);

\node at  (1.55,0) { $ x $ };
\node at  (0,.85) { $ \rho_0 $ };

\draw (1.2,.5)--(0,.5);
\draw (0,-.5)--(-1.2,-.5);
\end{tikzpicture} \; \; \; \; \; \; \;
\begin{tikzpicture}[ scale = 1.5]
\draw[ultra thin, gray, decoration={markings,mark=at position 1 with {\arrow[scale=2]{>}}},
    postaction={decorate}](0,0) -- (0,1.7);
\draw[ ultra thin, gray, decoration={markings,mark=at position 1 with {\arrow[scale=2]{>}}},
    postaction={decorate}](-1.4,0) -- (1.4,0);

\node at  (1.6,0) { $ x $ };
\node at  (0,1.9) { $ y $ };

\draw (-.3,0)-- (-1.4,1.078);
\draw (-.6,0)-- (-1.4,0.784001);
\draw (-.9,0)-- (-1.4,0.490001);
\draw (-1.2,0)--(-1.4,0.196);

\draw [ultra thin] (0,0) -- (-1.2,1.5);
\draw [ultra thin] (0,0) -- (-.9,1.5);
\draw [ultra thin] (0,0) -- (-.6,1.5);
\draw [ultra thin] (0,0) -- (-.3,1.5);

\draw (0,0) -- (-1.4,1.372);
\draw (0,0) -- ( 0-0.117739,1.5);

\draw (.3,0)--(0.3-0.117739,1.5);
\draw (.6,0)-- (0.6-0.117739,1.5);
\draw (.9,0)--(.9-0.117739, 1.5);
\draw (1.2,0)--(1.2-0.117739, 1.5);
\end{tikzpicture}

\end{figure}

\subsubsection{Example 3} We now consider domain wall type boundary conditions on the cylinder, or equivalently, periodic domain wall boundary conditions on a real line. The initial conditions are:
\begin{align*}
\rho_0  =  \begin{cases} \rho_L \; & 0< x <x_1 \\ \rho_R \; & x_1<x < L \end{cases}
\end{align*}
\begin{figure}[h]
\begin{tikzpicture}[ scale = 2.4]
\draw[ultra thin, gray, decoration={markings,mark=at position 1 with {\arrow[scale=2]{>}}},
    postaction={decorate}](0,-.7) -- (0,.7);
\draw[ ultra thin, gray, decoration={markings,mark=at position 1 with {\arrow[scale=2]{>}}},
    postaction={decorate}](-.2,0) -- (2.5,0);

\node at  (2.65,0) { $ x $ };
\node at  (0,.85) { $ \rho_0 $ };

\draw (0,.5)--(.6,.5);
\draw (.6,-.5)--(1,-.5);

\draw (1,.5)--(1.6,.5);
\draw (1.6,-.5)--(2,-.5);

\draw (2,.5)--(2.4,.5);

\draw [ultra thin, black!30!white, dashed] (1,-.7)--(1,.7);
\draw [ultra thin, black!30!white,dashed] (2,-.7)--(2,.7);

\draw [ultra thin] (.6,-.075)--(.6,0); \node at (.72,-.1) { \tiny{$ x_1 $} };
\draw [ultra thin] (1.6,-.075)--(1.6,0); 

\draw [ultra thin] (1,-.075)--(1,0);
\draw [ultra thin] (2,-.075)--(2,0);
\node at (1.1,-.1) { \tiny{$ L $} };
\node at (2.15,-.1) { \tiny{$ 2 L $} };

\end{tikzpicture}
 
\end{figure}

In the following analysis, we assume that $ \rho_L =1, \rho_R = -1 $; and $ x_1 > \frac{1}{2} L ( 1 - v) $. The more general initial condition is similar. For these initial conditions, there is both a shock and a rarefaction fan. It is straightforward to solve the differential equations for the propogation of the shock; we omit the details and summarise the result, see Figure \ref{fig:periodicstepcharacteristics}.

For $ y \in [0, y_1] $ the shock moves with constant speed, as the "frozen" characteristics collide. From $ y \in [y_1, y_2] $,the shock curves as the frozen characteristics collide with the rarefaction fan. For $ y > y_2 $, the shock asymptotes to a straight line with slope:
\begin{align*}
\lim_{y \rightarrow \infty} \frac{ds}{dy} = f \left( \rho_{\text{ave}}\right)
\end{align*}
Here $ \rho_{\text{ave}} = \frac{2 x_1 - L}{L} $ is the asymptotic value of $ \rho $, and the average density of $ \rho $ on the cylinder.

A similar picture occurs if $ x_1 < \frac{1}{2} L ( 1- v) $, except that the frozen region corresponding to $ \rho = 1 $ vanishes first. For the critical value of $ x_1 = \frac{1}{2} L ( 1 - v)  $ the frozen regions disappear at the same time $ y_1 = y_2 $.

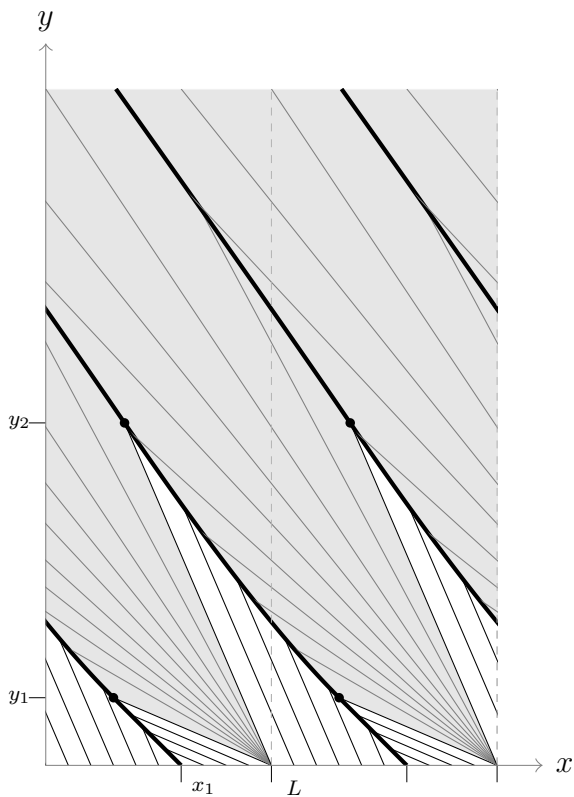
\begin{figure}[H]
\begin{tikzpicture}[ scale = 3]

\draw [ultra thin, black!30!white, dashed] (1,0)--(1,3);
\draw [ultra thin, black!30!white, dashed] (2,0)--(2,3);

\begin{scope}
\clip(0,0) rectangle (2,3);

\begin{scope}
\draw [white, fill = black!10!white] (1,0) -- (1-0.428571*1.51875,1.51875)--(-.68942,3)--(-1.68942,3)--(-0.428571*1.51875,1.51875)-- (.1-0.428571*1.15052,1.15052)--(.2-0.428571*0.833333,0.833333)--  (.4-0.428571*0.35,0.35) -- (.5-0.428571*0.175,0.175) -- (1-2.33333*.3,.3)-- (1,0);

\draw (0,0) -- (-0.428571*1.51875,1.51875);
\draw (.1,0) -- (.1-0.428571*1.15052,1.15052);
\draw (.2,0) -- (.2-0.428571*0.833333,0.833333);
\draw (.3,0) -- (.3-0.428571*0.567188,0.567188);
\draw (.4,0) -- (.4-0.428571*0.35,0.35);
\draw (.5,0) -- (.5-0.428571*0.175,0.175);
\draw (.7,0) -- (.7-2.333331*0.075,0.075);
\draw (.8,0) -- (.8-2.33333*0.15,0.15);
\draw (.9,0) -- (.9-2.33333*0.225,0.225);
\draw (1,0) -- (1-2.33333*.3,.3);

\fill (.3,.3) circle (.6pt);
\fill (-0.650893,1.51875) circle (.6pt);

\draw [domain = 0:.3, smooth, ultra thick] plot ({.6-\x},{\x});
\draw [domain = .3:1.51875, smooth, ultra thick] plot ({0.771429 - 0.625969*sqrt(\x) - 0.428571*\x},{\x});
\draw [domain = 1.51875:3, smooth, ultra thick] plot ({.5 - 0.0867857/ \x - 0.720165*\x},{\x});

\begin{scope} \clip(1,0) -- (1-0.428571*1.51875,1.51875)--(-.68942,3)--(-1.68942,3)--(-0.428571*1.51875,1.51875)-- (.1-0.428571*1.15052,1.15052)--(.2-0.428571*0.833333,0.833333)--  (.4-0.428571*0.35,0.35) -- (.5-0.428571*0.175,0.175) -- (1-2.33333*.3,.3)-- (1,0);

\draw[gray] (1,0)--(-.6,3);
\draw[gray] (1,0)--(-1,3);
\draw[gray] (1,0)--(-1.4,3);
\draw[gray] (1,0)--(-1.8,3);
\draw[gray] (1,0)--(-2.3,3);
\draw[gray] (1,0)--(-2.8,3);
\draw[gray] (1,0)--(-3.4,3);
\draw[gray] (1,0)--(-4.2,3);
\end{scope}

\end{scope}

\begin{scope}[shift={(1,0)}]
\draw [white, fill = black!10!white] (1,0) -- (1-0.428571*1.51875,1.51875)--(-.68942,3)--(-1.68942,3)--(-0.428571*1.51875,1.51875)-- (.1-0.428571*1.15052,1.15052)--(.2-0.428571*0.833333,0.833333)--  (.4-0.428571*0.35,0.35) -- (.5-0.428571*0.175,0.175) -- (1-2.33333*.3,.3)-- (1,0);

\fill (.3,.3) circle (.6pt);
\fill (-0.650893,1.51875) circle (.6pt);

\draw (0,0) -- (-0.428571*1.51875,1.51875);
\draw (.1,0) -- (.1-0.428571*1.15052,1.15052);
\draw (.2,0) -- (.2-0.428571*0.833333,0.833333);
\draw (.3,0) -- (.3-0.428571*0.567188,0.567188);
\draw (.4,0) -- (.4-0.428571*0.35,0.35);
\draw (.5,0) -- (.5-0.428571*0.175,0.175);
\draw (.7,0) -- (.7-2.333331*0.075,0.075);
\draw (.8,0) -- (.8-2.33333*0.15,0.15);
\draw (.9,0) -- (.9-2.33333*0.225,0.225);
\draw (1,0) -- (1-2.33333*.3,.3);

\draw [domain = 0:.3, smooth, ultra thick] plot ({.6-\x},{\x});
\draw [domain = .3:1.51875, smooth, ultra thick] plot ({0.771429 - 0.625969*sqrt(\x) - 0.428571*\x},{\x});
\draw [domain = 1.51875:3, smooth, ultra thick] plot ({.5 - 0.0867857/ \x - 0.720165*\x},{\x});

\begin{scope} \clip(1,0) -- (1-0.428571*1.51875,1.51875)--(-.68942,3)--(-1.68942,3)--(-0.428571*1.51875,1.51875)-- (.1-0.428571*1.15052,1.15052)--(.2-0.428571*0.833333,0.833333)--  (.4-0.428571*0.35,0.35) -- (.5-0.428571*0.175,0.175) -- (1-2.33333*.3,.3)-- (1,0);

\draw[gray] (1,0)--(-.6,3);
\draw[gray] (1,0)--(-1,3);
\draw[gray] (1,0)--(-1.4,3);
\draw[gray] (1,0)--(-1.8,3);
\draw[gray] (1,0)--(-2.3,3);
\draw[gray] (1,0)--(-2.8,3);
\draw[gray] (1,0)--(-3.4,3);
\draw[gray] (1,0)--(-4.2,3);
\end{scope}

\end{scope}

\begin{scope}[shift={(2,0)}]
\draw [white, fill = black!10!white] (1,0) -- (1-0.428571*1.51875,1.51875)--(-.68942,3)--(-1.68942,3)--(-0.428571*1.51875,1.51875)-- (.1-0.428571*1.15052,1.15052)--(.2-0.428571*0.833333,0.833333)--  (.4-0.428571*0.35,0.35) -- (.5-0.428571*0.175,0.175) -- (1-2.33333*.3,.3)-- (1,0);

\fill (.3,.3) circle (.6pt);
\fill (-0.650893,1.51875) circle (.6pt);

\draw (0,0) -- (-0.428571*1.51875,1.51875);
\draw (.1,0) -- (.1-0.428571*1.15052,1.15052);
\draw (.2,0) -- (.2-0.428571*0.833333,0.833333);
\draw (.3,0) -- (.3-0.428571*0.567188,0.567188);
\draw (.4,0) -- (.4-0.428571*0.35,0.35);
\draw (.5,0) -- (.5-0.428571*0.175,0.175);
\draw (.7,0) -- (.7-2.333331*0.075,0.075);
\draw (.8,0) -- (.8-2.33333*0.15,0.15);
\draw (.9,0) -- (.9-2.33333*0.225,0.225);
\draw (1,0) -- (1-2.33333*.3,.3);

\draw [domain = 0:.3, smooth, ultra thick] plot ({.6-\x},{\x});
\draw [domain = .3:1.51875, smooth, ultra thick] plot ({0.771429 - 0.625969*sqrt(\x) - 0.428571*\x},{\x});
\draw [domain = 1.51875:3, smooth, ultra thick] plot ({.5 - 0.0867857/ \x - 0.720165*\x},{\x});

\begin{scope} \clip(1,0) -- (1-0.428571*1.51875,1.51875)--(-.68942,3)--(-1.68942,3)--(-0.428571*1.51875,1.51875)-- (.1-0.428571*1.15052,1.15052)--(.2-0.428571*0.833333,0.833333)--  (.4-0.428571*0.35,0.35) -- (.5-0.428571*0.175,0.175) -- (1-2.33333*.3,.3)-- (1,0);

\draw[gray] (1,0)--(-.6,3);
\draw[gray] (1,0)--(-1,3);
\draw[gray] (1,0)--(-1.4,3);
\draw[gray] (1,0)--(-1.8,3);
\draw[gray] (1,0)--(-2.3,3);
\draw[gray] (1,0)--(-2.8,3);
\draw[gray] (1,0)--(-3.4,3);
\draw[gray] (1,0)--(-4.2,3);
\end{scope}

\end{scope}

\begin{scope}[shift={(3,0)}]
\draw [white, fill = black!10!white] (1,0) -- (1-0.428571*1.51875,1.51875)--(-.68942,3)--(-1.68942,3)--(-0.428571*1.51875,1.51875)-- (.1-0.428571*1.15052,1.15052)--(.2-0.428571*0.833333,0.833333)--  (.4-0.428571*0.35,0.35) -- (.5-0.428571*0.175,0.175) -- (1-2.33333*.3,.3)-- (1,0);

\draw (0,0) -- (-0.428571*1.51875,1.51875);
\draw (.1,0) -- (.1-0.428571*1.15052,1.15052);
\draw (.2,0) -- (.2-0.428571*0.833333,0.833333);
\draw (.3,0) -- (.3-0.428571*0.567188,0.567188);
\draw (.4,0) -- (.4-0.428571*0.35,0.35);
\draw (.5,0) -- (.5-0.428571*0.175,0.175);
\draw (.7,0) -- (.7-2.333331*0.075,0.075);
\draw (.8,0) -- (.8-2.33333*0.15,0.15);
\draw (.9,0) -- (.9-2.33333*0.225,0.225);
\draw (1,0) -- (1-2.33333*.3,.3);

\draw [domain = 0:.3, smooth, ultra thick] plot ({.6-\x},{\x});
\draw [domain = .3:1.51875, smooth, ultra thick] plot ({0.771429 - 0.625969*sqrt(\x) - 0.428571*\x},{\x});
\draw [domain = 1.51875:3, smooth, ultra thick] plot ({.5 - 0.0867857/ \x - 0.720165*\x},{\x});

\begin{scope} \clip(1,0) -- (1-0.428571*1.51875,1.51875)--(-.68942,3)--(-1.68942,3)--(-0.428571*1.51875,1.51875)-- (.1-0.428571*1.15052,1.15052)--(.2-0.428571*0.833333,0.833333)--  (.4-0.428571*0.35,0.35) -- (.5-0.428571*0.175,0.175) -- (1-2.33333*.3,.3)-- (1,0);

\draw[gray] (1,0)--(-.6,3);
\draw[gray] (1,0)--(-1,3);
\draw[gray] (1,0)--(-1.4,3);
\draw[gray] (1,0)--(-1.8,3);
\draw[gray] (1,0)--(-2.3,3);
\draw[gray] (1,0)--(-2.8,3);
\draw[gray] (1,0)--(-3.4,3);
\draw[gray] (1,0)--(-4.2,3);
\end{scope}

\end{scope}

\end{scope}

\draw [ultra thin] (.6,-.075)--(.6,0); \node at (.7,-.1) { \tiny{$ x_1 $} };
\draw [ultra thin] (1.6,-.075)--(1.6,0); 

\draw [ultra thin] (1,-.075)--(1,0);
\draw [ultra thin] (2,-.075)--(2,0);
\node at (1.1,-.1) { \tiny{$ L $} };

\draw [ultra thin, black!30!white, dashed] (1,0)--(1,3);
\draw [ultra thin, black!30!white, dashed] (2,0)--(2,3);

\draw [ultra thin] (-.075,.3)--(0,.3); \node at (-.1,.3) {\tiny{$ y_1 $ } };
\draw [ultra thin] (-.075,1.51875)--(0,1.51875);  \node at (-.1,1.51875) {\tiny{$ y_2 $ } };

\draw[ultra thin, gray, decoration={markings,mark=at position 1 with {\arrow[scale=2]{>}}},
    postaction={decorate}](0,0) -- (0,3.2);
\draw[ ultra thin, gray, decoration={markings,mark=at position 1 with {\arrow[scale=2]{>}}},
    postaction={decorate}](0,0) -- (2.2,0);

\node at  (2.3,0) { $ x $ };
\node at  (0,3.3) { $ y $ };
\end{tikzpicture}
\caption{Step initial conditions on the cylinder.} \label{fig:periodicstepcharacteristics}
\end{figure}

\appendix

\section{Stochasticity of the transfer matrix}\label{sec:StoTrans}

\begin{lemma}\label{lem:sum-trans}
With the stochastic weights (\ref{eq:stochasticweights}) and fixed state $ \alpha $:
\begin{align*}
\sum_{\beta \in \mathcal{S}} \langle \beta, T_M \alpha\rangle = 1 + b_1^{M-m } b_2^{m}
\end{align*}
where $ m = m(\alpha) $.
\end{lemma}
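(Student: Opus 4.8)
The plan is to collapse the sum over output states into the trace of a product of $2\times 2$ matrices acting on the auxiliary (horizontal) $\CC^2$, and then evaluate that trace by exhibiting a common eigenvector. Write $\alpha=e_{\alpha_1}\otimes\cdots\otimes e_{\alpha_M}$ with $\alpha_i\in\{0,1\}$ and $m=m(\alpha)=\sum_i\alpha_i$. Summing a vector of $\mathscr H_M$ over the standard basis is the same as pairing it with the all-ones covector $\mathbbm 1=(e_0^*+e_1^*)^{\otimes M}$, so the left-hand side equals $\langle\mathbbm 1, T_M\alpha\rangle$. Substituting $T_M=\operatorname{Tr}_0(R_{10}R_{20}\cdots R_{M0})$ and contracting, for each $i$, the quantum ($i$-th) factor of $R_{i0}$ against the input $|e_{\alpha_i}\rangle$ and the output covector $e_0^*+e_1^*$, each $R_{i0}$ becomes an operator $\w R^{(\alpha_i)}:=(\langle e_0^*+e_1^*|\otimes\operatorname{id}_{\CC^2})\,R\,(|e_{\alpha_i}\rangle\otimes\operatorname{id}_{\CC^2})$ on the single auxiliary space, and only the trace over that space survives:
\[
\sum_{\beta\in\mathcal S}\langle\beta,T_M\alpha\rangle=\operatorname{Tr}\!\left(\w R^{(\alpha_1)}\w R^{(\alpha_2)}\cdots\w R^{(\alpha_M)}\right).
\]
With the stochastic weights $(\ref{eq:stochasticweights})$, i.e. $a_1=a_2=1$ and $c_i=1-b_i$, a direct computation from the matrix $R$ gives $\w R^{(0)}=\left(\begin{smallmatrix}1&c_1\\0&b_1\end{smallmatrix}\right)$ and $\w R^{(1)}=\left(\begin{smallmatrix}b_2&0\\c_2&1\end{smallmatrix}\right)$.

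\textbf{The key step.} This is where the stochastic relation $c_i=1-b_i$ is essential: the vector $w=(1,-1)^{\top}$ is a common eigenvector, $\w R^{(0)}w=b_1w$ and $\w R^{(1)}w=b_2w$, so $\CC w$ is a common invariant line. Since $\det\w R^{(0)}=b_1$ and $\det\w R^{(1)}=b_2$, both operators act as the identity on the quotient $\CC^2/\CC w$. Equivalently, in the basis $\{(1,-1)^{\top},(1,0)^{\top}\}$ both $\w R^{(0)}$ and $\w R^{(1)}$ are upper triangular, with diagonals $(b_1,1)$ and $(b_2,1)$ respectively.

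\textbf{Conclusion.} A product of upper triangular matrices is upper triangular with diagonal equal to the entrywise product of the diagonals. Hence the $(1,1)$ entry of $\w R^{(\alpha_1)}\cdots\w R^{(\alpha_M)}$ is $\prod_i d_i$, where $d_i=b_1$ if $\alpha_i=0$ and $d_i=b_2$ if $\alpha_i=1$, while the $(2,2)$ entry is $1$. Taking the trace gives $b_1^{\,\#\{i:\alpha_i=0\}}b_2^{\,\#\{i:\alpha_i=1\}}+1=1+b_1^{M-m}b_2^{m}$, which is the claim. (As a by-product, the value depends only on the multiset of the $\alpha_i$, hence only on $m$, which explains why the order of the sites is irrelevant.)

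\textbf{Main obstacle.} The only delicate part is the reduction in the first paragraph: pinning down the conventions (which tensor factor of $R$ is the auxiliary space, and hence the precise form of $\w R^{(0)},\w R^{(1)}$) and checking that the output-summed one-row partition function really is the stated auxiliary-space trace. Once this bookkeeping is in place, the remaining steps are immediate, the single point of content being the observation that $(1,-1)^{\top}$ is a common eigenvector of $\w R^{(0)}$ and $\w R^{(1)}$ — which is precisely what the stochasticity $c_i=1-b_i$ forces.
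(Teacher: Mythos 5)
Your proof is correct, and it takes a genuinely different route from the paper. The paper argues combinatorially: it parametrizes a one-row configuration by the trajectories of the $m$ particles, factors the Boltzmann weight into per-particle path weights (\ref{eq:pathweight}), sums a geometric series for each particle to get the product $\prod_i(A_i+B_i)$ in (\ref{eq:prodpaths}), and then subtracts the overcounted invalid configurations (two consecutive particles landing on the same site) by observing that every cross term $A_iB_{i+1}$ is exactly such an invalid contribution, leaving $A_1\cdots A_m+B_1\cdots B_m$. You instead stay entirely in the auxiliary space: summing over output states pairs $T_M=\Tr_0(R_{10}\cdots R_{M0})$ with $\mathbbm 1$, collapsing it to $\Tr\bigl(\w R^{(\alpha_1)}\cdots\w R^{(\alpha_M)}\bigr)$ with $\w R^{(0)}=\left(\begin{smallmatrix}1&c_1\\0&b_1\end{smallmatrix}\right)$, $\w R^{(1)}=\left(\begin{smallmatrix}b_2&0\\c_2&1\end{smallmatrix}\right)$ (these are correct for the paper's conventions), and the stochasticity $c_i=1-b_i$ is exactly the statement that $(1,-1)^{\top}$ is a common eigenvector with eigenvalues $b_1,b_2$; the determinant argument then gives the second diagonal entry $1$ and the trace $1+b_1^{M-m}b_2^{m}$. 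Your route is shorter, avoids the inclusion--exclusion bookkeeping entirely, makes transparent why the answer depends only on $m$ and why the stochastic condition is the essential input, and generalizes more readily (e.g.\ to higher-spin vertical spaces, where only the $2\times2$ auxiliary structure matters); the paper's route has the advantage of exhibiting the individual configuration weights, which is reused in the probabilistic interpretation. One small point worth flagging: your answer $1+b_1^{M-m}b_2^{m}$ agrees with the lemma as stated, whereas the last line of the paper's own proof reads $1+b_1^{n}b_2^{M-m}$ and Proposition \ref{prop:stochasticweights} uses $1+b_1^{m}b_2^{M-m}$ — the exponents are swapped somewhere in the paper; your direct check (e.g.\ $M=1$) confirms that the version in the lemma statement is the one consistent with the $R$-matrix conventions.
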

\begin{proof}
For a state $ \alpha $, we will write $ \alpha_i $ for the position of the $ i$th particle for $ i = 1, \cdots, m $, with the convention that $ \alpha_{m+1} = \alpha_1 + M $. For two states $ \alpha $ and $ \beta $, we associate the 6-vertex configuration $ (\alpha, \beta) $ with the $ i$th particle going from $ \alpha_i $ to $ \beta_i $. The ice rule implies that a valid configuration $(\alpha, \beta) $ must satisfy $ \alpha_i \leq \beta_i \leq \alpha_{i+1} $, and $ \beta_i \neq \beta_{i+1} $ for all $ i $.

$\Delta=\text{cosh}(\eta)$, $H=-V=\pm \eta/2$, $\gamma=\mp(u+\eta)$.

Let us define the weight of a particle's path as:
\begin{align}\label{eq:pathweight}
w(\alpha_i, \beta_i) = \begin{cases} b_2 &\mbox{if } \alpha_i = \beta_i \\
c_1 c_2 \; b_1^{ \beta_i - \alpha_i -1 } & \mbox{if } \beta_i < \alpha_{i+1} \\
b_1^{ \beta_i - \alpha_i -1} & \mbox{if } \beta_i = \alpha_{i+1} \end{cases}
\end{align}
(Only the third case isn't immediately obvious from 6-vertex weights.) It is straightforward to check that the weight of any valid configuration $ (\alpha, \beta) $ is the product of path weights:
\begin{align*}
W(\alpha, \beta) = \prod_{i=1}^m  w(\alpha_i, \beta_i)
\end{align*}

We will compute the sum over all configurations by summing each $ \beta_i $ from $ \alpha_i $ to $ \alpha_{i+1} $, and afterward subtracting the overcounted terms corresponding to invalid configurations (when $ \beta_i = \beta_{i+1} $  for some $ i $).
\\

\noindent \emph{Example of two Particles:} The sum is easily computed by geometric series:
\begin{align*}
\sum_{\beta_{i} = \alpha_{i}}^{\alpha_{i+1}} w(\alpha_i, \beta_i) &= b_2 + \sum_{\beta_{i} = \alpha_{i}+1}^{\alpha_{i+1}-1} c_1 c_2 \; b_1^{ \beta_i - \alpha_i -1 } + b_1^{\alpha_{i+1} - \alpha_i - 1} \\
&= b_2 + c_1 c_2 \frac{( 1 - b_1)^{\alpha_{i+1} - \alpha_{i} - 1}}{1 - b_1} + b_1^{\alpha_{i+1} - \alpha_{i}  -1} \\
&= b_2 + (1 - b_2)( 1 - b_1)^{\alpha_{i+1} - \alpha_{i} - 1} + b_1^{\alpha_{i+1} - \alpha_{i}  -1} \\
&= 1 + b_1 b_2^{\alpha_{i+1} - \alpha_{i}- 1}
\end{align*}
Here we used $ c_1 = 1 - b_1 $ and $ c_2 = 1 - b_2 $. So we have:
\begin{align*}
&= (1 + b_1 b_2^{\alpha_{2} - \alpha_{1}- 1}) ( 1 + b_1 b_2^{\alpha_{1}+ M - \alpha_{2}- 1}) \\
&= 1 +  b_1 b_2^{\alpha_{2} - \alpha_{1}- 1} + b_1 b_2^{\alpha_{1}+M - \alpha_{2}- 1} + b_1^2 b_2^{M - 2}
\end{align*}
The middle terms are precisely the weights assigned to the overlap states, $ w(\alpha_1, \alpha_2 )  w(\alpha_2,\alpha_2)  $ and  $ w(\alpha_1, \alpha_1) w(\alpha_2, \alpha_1+M) $. Subtracting these gives:
\begin{align*}
\sum_{\beta} W(\alpha, \beta) = ( 1 + b_1 b_2^{M-1})
\end{align*}

\noindent \emph{General case:} We can compute in the same way as above that:
\begin{align} \label{eq:prodpaths}
\prod_{i=1}^m \sum_{\beta_i= \alpha_i}^{\alpha_{i+1}}w(\alpha_i, \beta_i) = (1 + b_1 b_2^{\alpha_2 - \alpha_1 - 1}) (1 + b_1 b_2^{\alpha_3 - \alpha_2 - 1}) \cdots (1 + b_1 b_2^{\alpha_1 + M  - \alpha_n - 1})
\end{align}
We can organize the subtraction of invalid configurations as follows. Writing (\ref{eq:prodpaths}) as
\begin{align*}
(A_1 + B_1)(A_2 + B_2)\cdots(A_m+B_m)
\end{align*}
and expanding, a term in the expansion containing $ A_i B_{i+1} $ for any $ i $ represents sums of weights of states with the $ i $th particle and $ (i+1) $th particle overlapping, and invalid case. It follows that what remains is $ A_1 \cdots A_m+  B_1 \cdots B_m $. We then have:
\begin{align*}
A_1 \cdots A_m + B_1 \cdots B_m &= 1+ (b_1 b_2^{\alpha_2 - \alpha_1 - 1})(b_1 b_2^{\alpha_3 - \alpha_2 - 1}) \cdots (b_2^{\alpha_1 + M  - \alpha_m - 1}) \\
&= 1 + b_1^n b_2^{M - m}
\end{align*}

\end{proof}

\section{The Surface Tension Function for $ \Delta > 1 $} \label{sec:Surf}The free energy and surface tension of the 6-vertex model in the $ \Delta > 1 $ regime was studied in detail in \cite{BS} \cite{No}. We summarize briefly the computations and relevant results.

For the critical value of magnetic field $ H_0 = \pm \eta / 2 = \pm \cosh^{-1}(\Delta) / 2 $ and any magnetization (slope), the roots of the Bethe ansatz equation lie on a closed contour in the complex plane and can be solved exactly by Fourier transform. Near the critical value of $ H_0 $, the Bethe ansatz equations can be solved perturbatively to yield the perturbative expansions for the free energy $ f $ and the surface tension $ \sigma $.

In the phase diagram of the free energy $ f $, the exact solutions correspond to the conical singularities of $ f(H,V) $, at $ (H_0, V_0) = (\pm \eta /2 , \mp \eta/2) $ (see Figure (\ref{fig:phasediag1})). In the surface tension $ \sigma $, the conical points correspond to critical lines (see Figure  (\ref{fig:phasediag2})); each direction of approach to the conical point maps (by Legendre transform) to a slope $ (s,t) $ on the critical line. The critical lines in $ \sigma $ are given by the equation:
\begin{align} \label{eq:critline}
s = \frac{t \pm \tanh(u+\eta)}{1 \pm \tanh(u+\eta) t }
\end{align}

Near the critical line, the partial derivatives of the $ \sigma $ can be computed using the perturbative expressions given in \cite{BS}. A long but straightforward calculation, which we omit, yields:

\begin{lemma}
For slopes $ (s,t) $ on the critical lines (\ref{eq:critline}):
\begin{align*}
\sigma_{11}(s,t) \sigma_{22}(s,t) - \sigma_{12}(s,t)^2 = 0
\end{align*}
\end{lemma}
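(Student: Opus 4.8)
The plan is to bypass the lengthy perturbative computation and instead exploit the structural fact, recalled above, that $\sigma(s,t)$ is the Legendre transform of the free energy $f(H,V)$ and that the critical line $(\ref{eq:critline})$ is exactly the image of the conical point $(H_0,V_0)=(\pm\eta/2,\mp\eta/2)$ under the gradient of $f$. Write $\sigma(s,t)=\sup_{H,V}\big(sH+tV-f(H,V)\big)$ for the appropriate sign pairing between slopes and fields. Away from the conical point $f$ is smooth, and the Bethe-ansatz analysis of \cite{BS},\cite{No} shows that as $(H,V)$ tends to $(H_0,V_0)$ along the various rays the value $\nabla f(H,V)=(\partial_H f,\partial_V f)$ traces out precisely the curve $(\ref{eq:critline})$; equivalently, each direction of approach to the conical point corresponds to exactly one slope on the critical line.

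First I would record the dual statement. By Legendre duality, for a slope $(s,t)$ lying on the critical line the extremum defining $\sigma(s,t)$ is attained exactly at the conical point, so the inverse gradient relation gives $(\sigma_1(s,t),\sigma_2(s,t))=(H_0,V_0)$ for every such $(s,t)$. In other words $\nabla\sigma$ is \emph{constant} along the critical line. The key step is then immediate: parametrising the critical line by $\tau\mapsto(s(\tau),t(\tau))$ with $(s'(\tau),t'(\tau))\neq 0$ and differentiating the identity $(\sigma_1,\sigma_2)\equiv(H_0,V_0)$ in $\tau$ yields
\[
\begin{pmatrix}\sigma_{11}&\sigma_{12}\\ \sigma_{12}&\sigma_{22}\end{pmatrix}\begin{pmatrix}s'(\tau)\\ t'(\tau)\end{pmatrix}=\begin{pmatrix}0\\ 0\end{pmatrix},
\]
so the Hessian of $\sigma$ has a nonzero vector in its kernel and hence $\sigma_{11}\sigma_{22}-\sigma_{12}^2=0$ along the critical line. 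One may also read off the tangent direction of $(\ref{eq:critline})$ directly and check that it is annihilated by the Hessian, which is the content of the direct calculation the text leaves to the reader. The argument applies verbatim at each of the two conical points, giving the two branches of the critical lines.

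The main obstacle is precisely the input I took for granted: showing rigorously that $f$ has a genuine conical singularity at $(H_0,V_0)$ whose gradient image is the full curve $(\ref{eq:critline})$, and that $\sigma$ is twice differentiable on a neighbourhood of that curve so that the differentiation above is legitimate (including that the one-sided limits of $\nabla\sigma$ from the two sides of the line agree). This is exactly what the perturbative solution of the Bethe equations near $H_0$ in \cite{BS} provides, together with the change of variables to the Legendre dual. The alternative, self-contained route --- the one alluded to in the statement --- is to substitute the perturbative expressions for $\sigma_{11},\sigma_{12},\sigma_{22}$ valid near the critical line and verify the vanishing of the determinant by direct calculation; this is long but involves no conceptual difficulty.
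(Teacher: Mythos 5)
Your proposal is correct in substance but takes a genuinely different route from the paper: the paper's own ``proof'' of this lemma is the omitted ``long but straightforward calculation'' that substitutes the perturbative expansions of $\sigma$ near the critical line from \cite{BS} into the Hessian determinant, whereas you derive the degeneracy structurally from Legendre duality. Your key observation --- that the critical line (\ref{eq:critline}) is the gradient image of a single conical point of the free energy, so that $\nabla\sigma$ is constant along it and the Hessian therefore annihilates the nonzero tangent vector $(s'(\tau),t'(\tau))$ --- is a clean convex-analysis argument and is arguably more illuminating: it explains \emph{why} the determinant vanishes (the gradient map collapses a whole curve to one point, so its differential must be singular) rather than merely verifying it. What this buys is brevity and a conceptual reason; what it costs is that the two inputs you correctly flag must still be extracted from the Bethe-ansatz analysis of \cite{BS}, \cite{No}: (i) that the curve (\ref{eq:critline}) is exactly the set of slopes dual to the conical point, equivalently that the full gradient $(\sigma_1,\sigma_2)$ --- not just the single component $\sigma_2$ appearing in the level-set description of Lemma \ref{zero-He} --- is constant there (mind the pairing $\sigma_1\leftrightarrow V$, $\sigma_2\leftrightarrow H$ dictated by the form of the action (\ref{eq:action}), which your ``appropriate sign pairing'' caveat covers); and (ii) one-sided $C^2$ regularity of $\sigma$ up to the critical line from the liquid side, so that tangential differentiation of $\nabla\sigma$ along the line really produces the one-sided second derivatives that the paper subsequently uses (e.g.\ in Lemma \ref{lem:peq}, where one divides by $\sigma_{22}$). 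Since the paper itself asserts (i) in the same appendix and implicitly assumes (ii) for the lemma even to be meaningful, your argument is a legitimate and self-aware alternative to the direct computation.
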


\begin{figure}[h]
\subfloat[Subfigure 1 list of figures text][]{
\includegraphics[scale=.8]{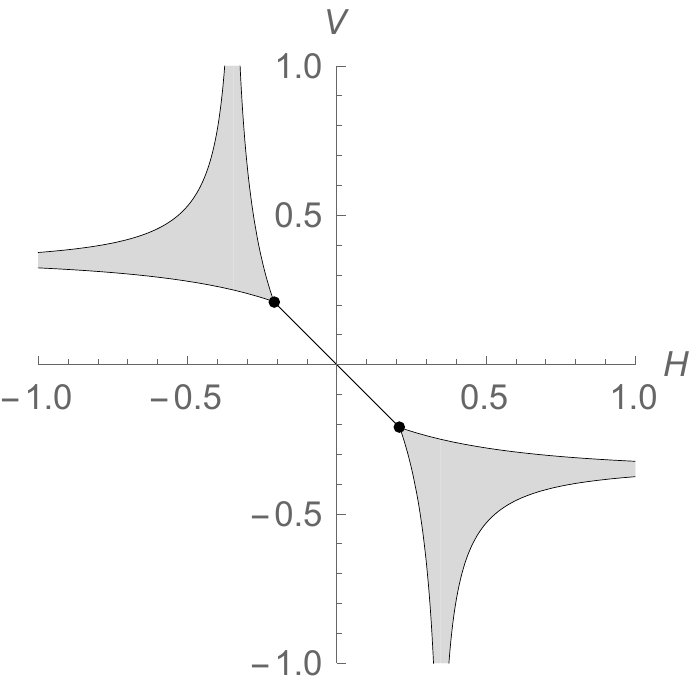}
\label{fig:phasediag1}}
\; \; \; \; \; \;
\subfloat[Subfigure 2 list of figures text][]{
\includegraphics[scale=.8]{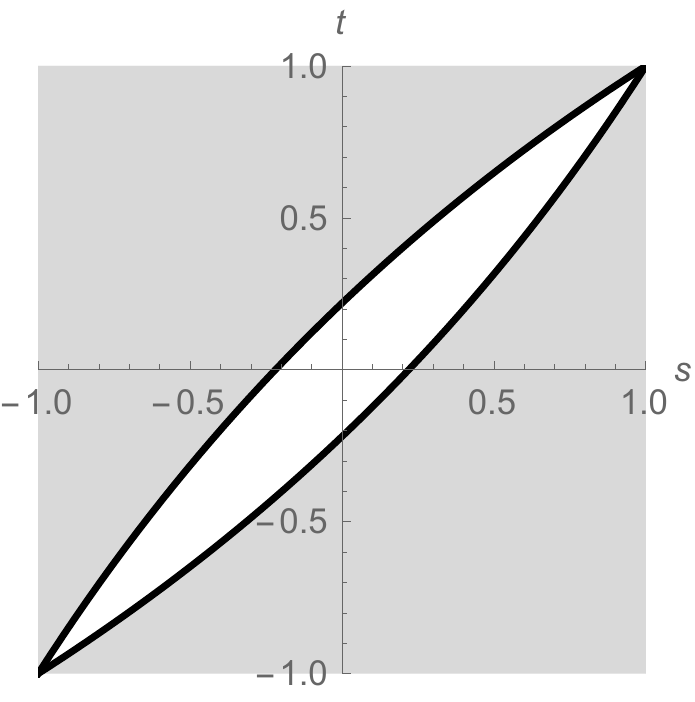}
\label{fig:phasediag2}}
\caption{(A) The phase diagram in the $ (H, V) $ plane. The shaded regions correspond to the liquid phase; the white regions are frozen. The dots are the location of the conical singularities in $ f $. (B) The phase diagram in the $ (s,t) $ plane. The shaded regions correspond to the liquid phase. The bold lines correspond to the conical points. There are no pure states with slopes between these critical lines.}
\label{fig:globfig}
\end{figure}

\begin{figure}
\includegraphics[scale=.6]{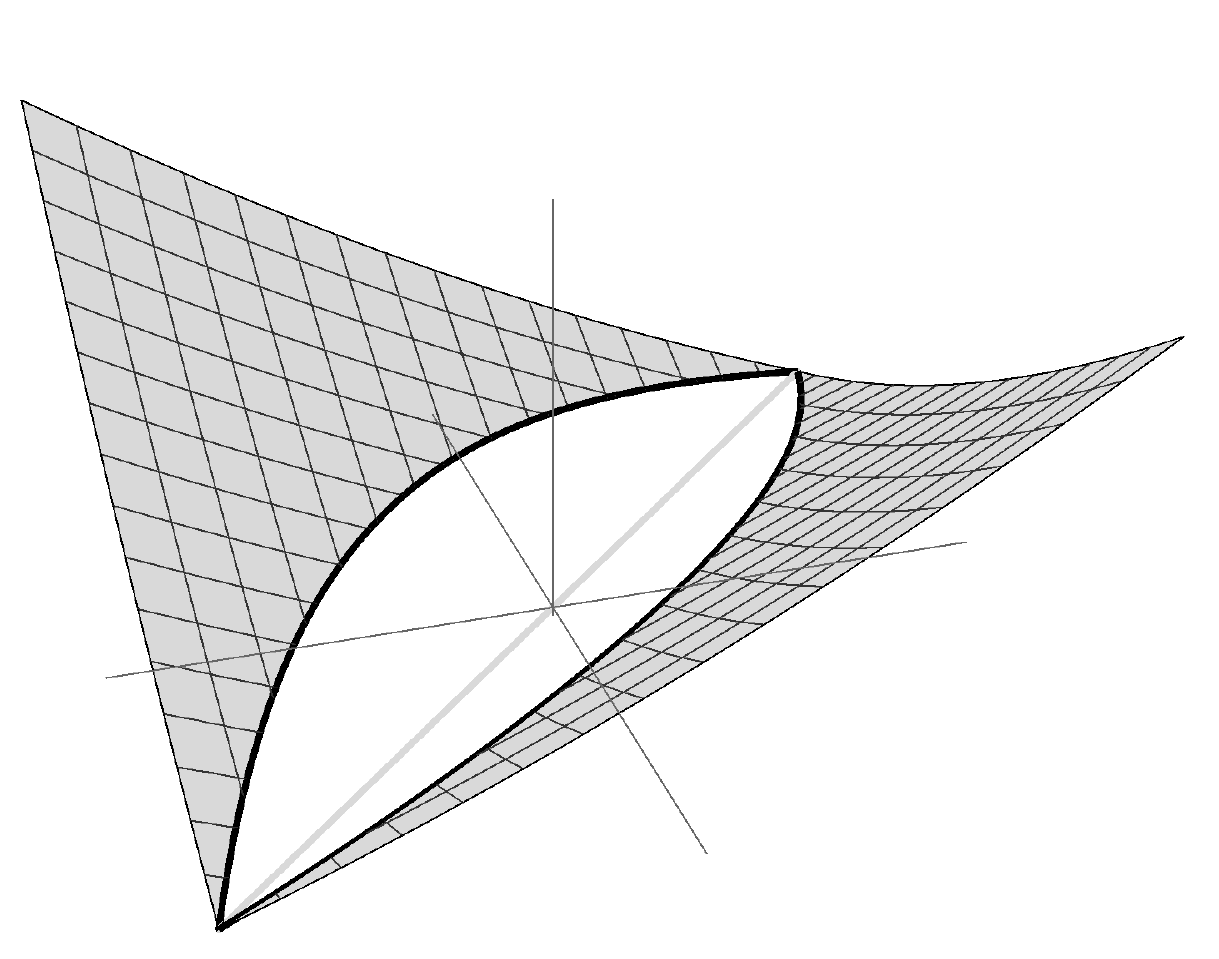}
\caption{A Sketch of the $ \sigma $, reproduced from \cite{BS}.}\label{fig:sigmasketch}
\end{figure}

\end{document}